\newcommand{\e}{\textrm{e}}
\newcommand{\A}{\textrm{A }}
\newcommand{\B}{\textrm{B }}
\newcommand{\R}{\textrm{R }}
\journalname{Wireless Networks} % Insert the name of "your journal"
\begin{document}	
	\title{Effective Capacity Maximization of Two-Way Full-Duplex and Half-Duplex relays with Finite Block Length Packets Transmission}		
	\titlerunning{Effective Capacity Maximization of Two Way Full-Duplex and}
	
	\author{Mohammad Lari \and Zahra Keshavarz Gandomani}
	\institute{
		M. Lari \at
		Electrical and Computer Engineering Faculty, Semnan University, Semnan, Iran \\
		\email{m\_lari@semnan.ac.ir}
		\and
		Zahra Keshavarz Gandomani \at
		Electrical and Computer Engineering Faculty, Semnan University, Semnan, Iran \\
		\email{zahra\_keshavarz@semnan.ac.ir}
		}
	
	\date{Received: date / Accepted: date}	
	\maketitle
	
	\begin{abstract}
	In order to satisfy the delay requirements of telecommunication systems, in this paper, we present a cooperative network with the short packet transmission in the Rayleigh fading channel.	The desired relay can be implemented as a two-way half-duplex (HD) or a two-way full-duplex (FD). Also, for more accurate satisfaction and reduction of communication delays, sending and receiving with short packets is considered. Effective capacity appropriately measures the transmission rate under the delay constraint. Therefore, it is considered as a performance evaluation criterion here. With a two-way relay, two nodes exchange data with each other using a relay simultaneously. The priorities and requirements of the two nodes are not necessarily the same. Therefore, to increase performance, the system is modeled and solved as a multi-objective problem. In this way, the available power in the network is divided between the relay and two nodes, and the effective capacity of the two nodes is maximized. Depending on the different conditions, the optimal amount of allocated power to relay and nodes is calculated. However, due to the complexity and time consuming calculations, an approximate method which speeds up the calculation is presented. The approximated solution has a very close performance to the optimal allocated power. Finally, various comparisons have been made in different conditions between the performance of two-way HD and two-way FD relays. The improvement of multi-objective power allocation has been shown, especially when the relay is not located in the middle of two nodes. 
		
		\keywords{Effective Capacity \and 
				 Finite Block Length Packet \and
				 Full-Duplex Relays \and
				 Half-Duplex Relays \and 
				 Multi-Objective Optimization \and 
				 Power Allocation \and
				 Two Way Relays}		
	\end{abstract}

% ***********************************************
% SECTION 1 *************************************
% ***********************************************

\section{Introduction}\label{sec:introduction}
Fifth generation of telecommunication systems brings a new communication era, which would transform several industry aspects and people's lives. The Internet of Things (IoT) \cite{p1,p2}, self-driving and autonomous vehicles \cite{p3}, online control and industrial automation \cite{p4}, and virtual reality \cite{p5,p6} are instances of the fifth generation technology applications. For example, the IoT is an extensive concept, which denotes that all our surrounding objects are interconnected. This concept is applicable in all areas of energy, transportation, health, manufacturing, production, etc. to create an intelligent world with optimal use of resources \cite{p1,p2,p7}. To respond to the new requirements discussed and targeted in the fifth generation of communications, the network structure and its various parameters will undergo a fundamental transformation. Consequently, it will be possible to provide new services and meet different needs in advance.

The delay parameter and its association with delay-sensitive traffic are ignored or underestimated in previous generations \cite{p8,p9}. Human being has been pivotal in designing the previous generations of telecommunication systems (first to the fourth generation). As the human senses are less sensitive to low latencies, the delay parameter is usually not guaranteed in previous generations and tens of milliseconds delay were reasonably normal \cite{p10}. These circumstances are not tolerable for new requirements such as self-driving cars or remote surgery. In these applications and many other situations (the details of which are provided in various sources such as \cite{p7,p11,p12}), considering even a few milliseconds delay and guaranteeing its level of latency are compulsory \cite{p10}. In such cases, if the delay exceeds the threshold, the packet will be unusable and is usually discarded even if received completely and without any error. If a self-driving car transmits data to the approaching car or when overtaking a car with delay, the result will be catastrophic. Therefore, paying attention to delay at different network layers and transmission-reception techniques are focused recently. So, this paper will guarantee the statistical delay of packet by proposing a cooperative model for data transmission in the shortest possible time and appropriate allocated power. 

Delay is an extensive concept created in a telecommunication systems or networks for various reasons \cite{p13,lari2019transmission}. Therefore, it is necessary to pay attention to its origin to reduce delay and guarantee its maximum level. One reason for the delay is to transmit and decode relatively long packets in the receiver. Since it is possible to achieve a rate equal to the channel capacity for (infinitely) long packets in Shannon's theorem, the packets transmitted from the transmitter to the receiver have been long in recent years and conventional telecommunications. If these long packets are designed appropriately, (like using the Turbo codes or Low-Density Parity Check (LDPC) codes) data transmission rate will be very close to Shannon's rate. However, transmitting these packets is time-consuming, and their decoding delay is high. Therefore, they are not applicable in delay-sensitive traffics. On the other hand, in many applications, such as sensor networks or the Internet of Things, the data generated by the transmitters is very low and does not exceed a few bytes and it is not efficient to use long packets to transmit such data\cite{p14,p15}. Therefore, transferring the Finite Block Length (FBL) packets for data transmission and analyzing the performance of the telecommunication system by sending FBL packets are highly considered \cite{p16}. The conducted studies indicate that the communication rate differs from Shannon's rate when using the FBL packets and the error of the packets in the receiver is not zero. However, the transmission and decoding time of these packets is lower; therefore, this type of communication is being analyzed and implemented in the fifth generation of telecommunication systems \cite{p16}. Besides the transmission and decoding delays, the transmitter buffer may cause delay. In addition, the limited-time and frequency intervals in the telecommunication system could also cause a delay in packet transmission. Several other factors are also available in telecommunication networks, which might cause delay but they are not discussed here (for further information see \cite{p13}).

To provide high-quality service to delay-sensitive traffics, in this paper, a cooperative system including two nodes and a half-duplex (HD) two-way, or a full-duplex (FD) two-way relay between two nodes are considered. For ease of understanding of the content, the half-duplex and full-duplex performance is abbreviated as HD and FD, respectively. Using a simple relaying system, four time slots or frequencies are required to transfer data between two nodes. However, if a two-way relay is applied, the number of time or frequency slots is reduced. Two-way relaying can be applied for direct connection between devices without any interface (i.e. device-to-device connection or D2D) in a telecommunication network \cite{p17}. In a two-way relay with HD function, the relay receives data from both nodes within a time slot (or frequency) and retransmits the received data to both nodes in the next interval. Therefore, the number of required time intervals (or frequencies) is equal to 2. Similarly, two-way relay with FD function receives data from nodes and transmits data back to them in the same interval. Therefore, only 1 time (or frequency) interval is required. In this case, the nodes should also act as FD. Thus, the relays and nodes face the self-interference problem i.e. power transmission leakage to the self-receiver that should be managed and reduced properly. Accordingly, besides increasing the spectral gain, the use of two-way relays reduces delay \cite{p18,p19}.

The transmitter buffer is another bottleneck that causes delay. If the packet input and output rate to the buffer is not adjusted correctly, the suppression of data in the transmitter buffer will increase and this will cause a delay. Since the packet output rate from the transmitter buffer is proportional to the random capacity and varies with the wireless channel, the proper adjustment of the packet input rate to the buffer is very important. If the input rate is too much, packet latency and delay in the buffer will be high. On the other hand, if the input rate is too low, it will not be possible to provide high-speed services. This input rate, which is adjusted to the requested service quality and guarantees the statistical delay of packets latency in the buffer, is called the effective capacity. To maximize the effective capacity of two nodes in this paper, the optimal power is calculated for allocation to relay and nodes. Since the two nodes are different and their requested rate and quality of service and constraints are not necessarily identical, the effective capacity of both nodes must be maximized. Therefore, a multi-objective optimization problem arises. This problem will be discussed below and resolved appropriately. Then, the effective capacity of the system will be discussed under different circumstances based on system parameters like the residual self-interference, the received small packet error, and service quality parameter. Compared to similar studies, the innovations of this research paper are as follows.

•	Optimizing the effective capacity of the FD relay and two-way relay is the main innovation of this article. Using the FD and two-way relays can reduce the data transmission delay between nodes and they are suitable to transmit the delay-sensitive data.

•	Another innovation of this paper is to provide a multi-objective model for the optimization problem and the detailed study and provision of the relationships associated with it. The nature of the problem is multi-objective. However, most previous articles and reviews have not optimized these systems on a multi-objective basis. Therefore, considering the problem as a multi-objective phenomenon provides interesting and different points.

•	Here, it is proved that the optimization problem's solution is time-consuming and has high computational complexity. Therefore, an approximate but simpler solution, which is very close to the optimal performance, is proposed that could contribute to solving various situations quickly in a shorter time. 

The rest of this paper is structured as follows. Section 2 provides an overview of the previous research related to the subject of the article. Section 3 describes the telecommunication of FBL packets, the two-way HD, and FD relays, effective capacity, and channel model. Section 4 examines and solves the multi-objective optimization problem to maximize the effective capacity of two nodes. Section 5 provides the results of numerous simulations and comparisons and finally the paper is concluded in Section 6. 

% ***********************************************
% SECTION 2 *************************************
% ***********************************************

\section{Review of the previous studies}\label{sec:review}
Ultra-Reliable Low Latency Communications (URLLC) is one of the scenarios considered in the fifth generation of telecommunication systems. The amount of information transferred between transmitter and receiver is low in this type of communication. However, even this low information must be received with low delay and error to be applicable \cite{p13,p20}. Article \cite{p15} was the beginning of the research on this type of communication and references \cite{p21,p22,p23} were the continuation of this research trend. Since the quality of service in this type of communication is widely different from conventional communication (in conventional communication, the delay is not important, but transmission rate matters; however in URLLC, the transmission rate is not so important, but low delay and insignificant error matter), references \cite{p21,p22,p23} have addressed resource management to provide appropriate services to users in both categories. For example, \cite{p21} has considered a cellular network including different users with the various requested quality of services. One group is the conventional network users who require high rates and do not care about delay. The other category is the users who care about low delay and error rather than the transmission rate. The base station has to distribute time and frequency intervals and its power among users to maximize the capacity. In other cases, such as \cite{p24}, increasing the throughput rate of URLLC users and reducing their error under delay constraint are considered. For further reviews and other issues see \cite{p11,p13,p17}.

In recent years, as the telecommunication systems and user requirements became complex, the multi-objective optimization of telecommunication systems is focused \cite{p25,p26}. In this optimization method, several objective functions are optimized simultaneously rather than a single one. Thus, performance improvement compared to one-goal optimization is expected \cite{p27}. Therefore, in \cite{p21,p24} the problem is proposed and solved as multi-objective due to the presence of different objectives in URLLC.

FD communication and relays are emphasized due to their potential performance improvements \cite{p28}. Besides increasing the spectral efficiency, the use of FD relays reduces the transmission time between source and destination. Therefore, it is appropriate for delay-sensitive applications. For example, the cellular network users in \cite{p29} need low-latency service. FD relay is devised to provide the corresponding services to cell edge users. Self-interference reduction processes are Time-consuming and increases the delay. Therefore, the FD relay decreases the self-interference with the least possible processing and the residual interference is processed at the base station. Then, if necessary, this interference is used to reduce error. The authors in \cite{p29m} calculated and compared the reliability when using the FBLs in HD and FD relays. Then, they specified the threshold value for selecting the type of relay between HD and FD. Moreover, the FD state of base station performance is considered to provide the URLLC service to users \cite{p30m,p31m}. Like the FD relays, the two-way relays can increase the spectral efficiency and reduce the delay by declining the required time intervals when transferring data between two nodes \cite{p32m}.

Effective capacity is a recent definition of capacity, which considers the delay of storing data in the buffer. Effective capacity is a proper criterion to analyze the delay-sensitive users' performance and guaranteeing the quality of their requested service. One of the uses of low-latency transmission is telecommunication between vehicles. Therefore, the effective capacity is maximized by allocating power and bandwidth and considering the delay constraint in \cite{p33m}. In this paper, channel variations are rapid because of the high speed of vehicles; therefore, large-scale channel state information are applied to allocate power. Another application of low-latency transmission is the industrial application and Machine Type Communications (MTC). In \cite{p34m}, an industrial connection with some parallel fading channels is considered. The entry of transmitted packets is scattered and only the channel state information is statistically available in the transmitter. Accordingly, the effective capacity and quality of service are investigated and an upper bound is calculated for the possibility of packet error. In \cite{p35m,p36m}, the resource allocation to maximize the effective capacity of the Cloud Radio Access Network (C-RAN) is investigated for URLLC transmission. The authors of \cite{p37m} has addressed the performance of a telecommunication system with amplify and forward relay when transmitting the FBL packets. By allocating power in this paper, the effective capacity is maximized. Similar power allocation is addressed in \cite{p38m} to maximize the effective capacity of a downlink multi-user network when transmitting the FBL packets. However, there is no reliable article addressing the effective capacity of low-latency transmission in FD relays. Therefore, one of the innovations of this paper is to consider this model for the delay-sensitive data packets. 

% ***********************************************
% SECTION 3 *************************************
% ***********************************************

\section{System Model}\label{sec:system_model}

Here, the cooperative system including two-way HD and FD relays with transmitting the FBL packets is addressed. Effective capacity is also explained as the data entry rate to the transmitter buffer, which also guarantees the statistical delay of packets. 

\subsection{Short Packet Transmission}\label{subsec:short_packet_tx}

Suppose that the transmitter is going to send $s$ bits to the receiver. These bits are first encoded in the transmitter and converted into the $m$ symbols. Then, they are transmitted to the receiver as a packet with $m$ symbols. Since the length of the packet is finite, it is received in the receiver with the error probability of $\epsilon$ and the transmission rate is written as \cite{p15}:
% EQUATION ************************************************************************************************************
\begin{equation}
\frac{s}{m}
\end{equation}
In other words, in $m$ times of using the channel, $s$ bits are transmitted from the transmitter to the receiver. Dissimilar codes have different functions. For a certain amount of error $\epsilon$ in the signal to noise ratio (SNR) $\gamma$, the maximum value of rate is equal to $r$ and defined as \cite{p15,p39m,p40m}:
% EQUATION ************************************************************************************************************
\begin{equation}\label{eq:r}
r=\log_2(1+\gamma)-\sqrt{\frac{\gamma(\gamma+2)}{m(\gamma+1)^2}}Q^{-1}(\epsilon)\log_2(\e)+\frac{\log_2(m)}{m}
\end{equation}
where $Q^{-1}(.)$ depicts the inverse of Q-function \cite{p15,p39m,p40m}. The throughput rate $r$ in \eqref{eq:r} shows the maximum number of bits that are transmitted from the transmitter to the receiver with the error rate less than or equal to $\epsilon$ in $m$ times of using the normalized time and bandwidth channel. Therefore, the unit of $r$ is expressed in bits per channel use (bits/c.u.). If $m$ tends to the infinity and $\epsilon$ tends to zero, $r$ in \eqref{eq:r} tends to the famous Shannon's capacity \cite{p15,p41m}. Therefore, sometimes $r$ is considered as the channel capacity with a limited number of transmission \cite{p15}. It should be noted that according to \eqref{eq:r}, $r$ is a function of $\gamma$, $m$ and $\epsilon$; however, it is abbreviated as $r$ for brevity.

\subsection{Two-Way Relaying}\label{subsec:two_way_relaying}

Suppose that nodes \A and \B are going to transmit packets to each other and this data transmission is done through the relay \R. There is also no direct path between \A and \B. Normally, in the first time slot, node \A transmits its packet to relay \R. In the second slot, the relay re-sends the packet to node \B. Similarly, in the next two time slots, i.e. the third and fourth slots, node \B sends a packet to node \A. Therefore, four time slots are required to share two packets between two nodes. This transmission method is not cost-effective in terms of spectral efficiency and transmission delay. However, in the HD and FD two-way relays (Figures \ref{fig:1a} and \ref{fig:1b}), just two and one time slots are required to transmit a packet from node \A to \B and vice versa, respectively.

\begin{figure}
	\centering
	\subfloat[Half-Duplex (HD)]{\includegraphics[width=0.85\linewidth]{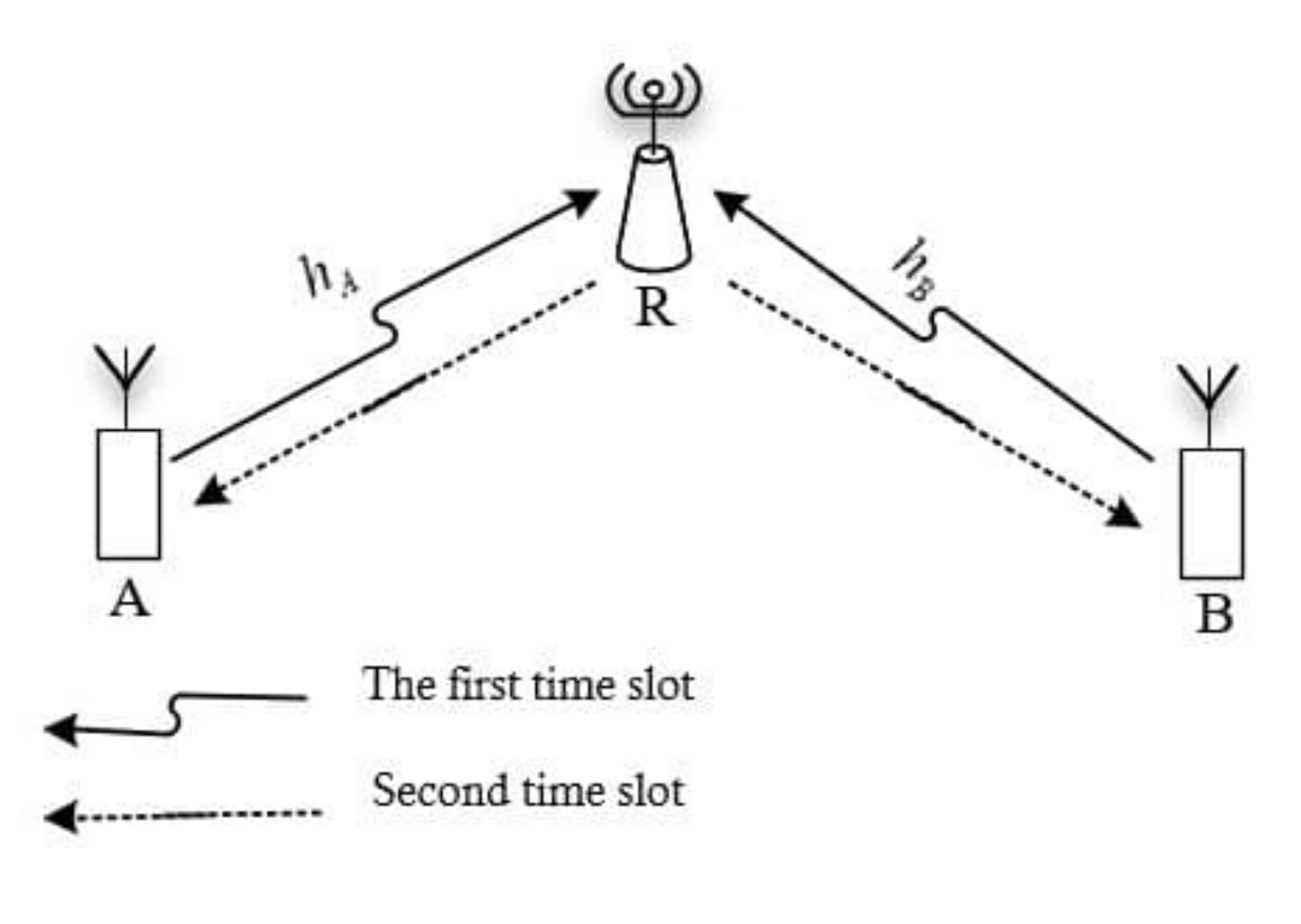}
		\label{fig:1a}}
	\hfil
	\subfloat[Full-Duplex (FD)]{\includegraphics[width=0.85\linewidth]{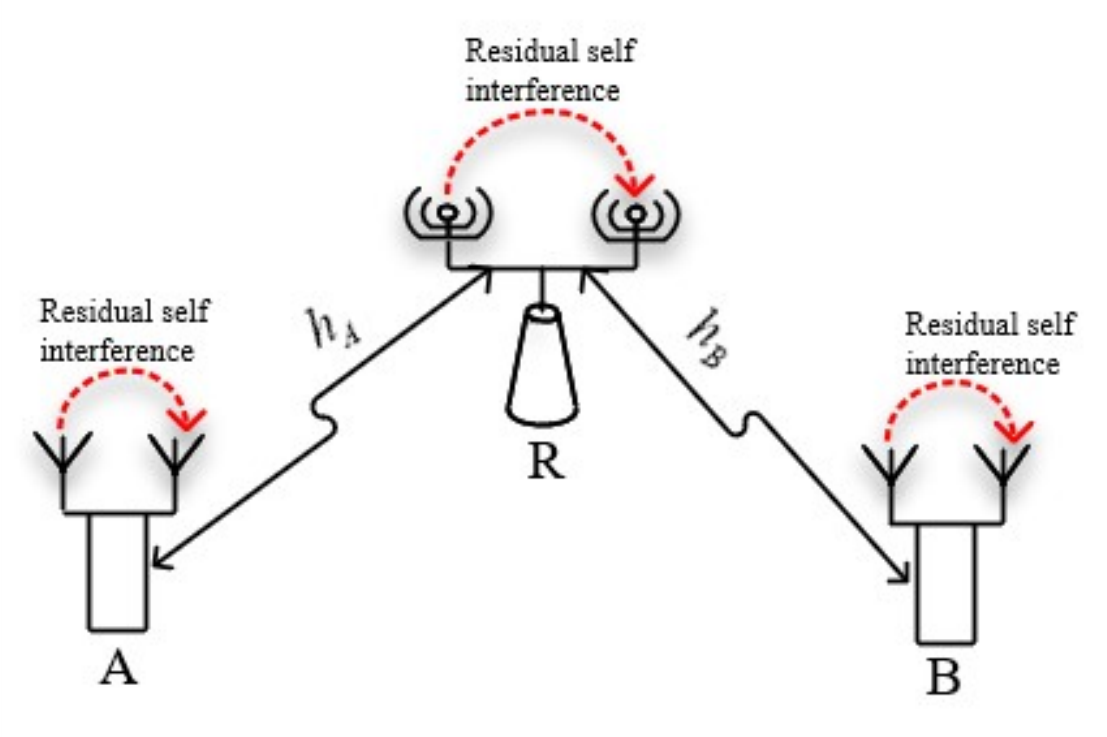}
		\label{fig:1b}}
	\hfil	
	\caption{Two-way relaying.}
	\label{fig:1}
\end{figure}

Suppose that node \A is going to send the short packet $x_{\A}$  with the length $m_{\A}$ to node \B and node \B is going to send the short packet $x_{\B}$ with the length $m_{\B}$ to node \A. Both nodes send their packets to the relay \R simultaneously. Node \A sends its packet with power $P_{\A}$ and node \B sends its packet with power $P_{\B}$. For the two-way relay with HD performance, after normalizing the power, the relay re-transmits the received packets for both nodes with power $P_{\R}$ in the next slot. Both nodes know their transmitted packets. Therefore, if they have access to the channel coefficients, they could remove the effect of their own packet from the received packet from the relay completely. In this case

% EQUATION ************************************************************************************************************
\begin{equation}\label{eq:HD_y_A}
y_{\A}=\sqrt{P_{\R}}h_{\A}\beta\left(\sqrt{P_{\B}}h_{\B}x_{\B}+w_{\R}\right)+w_{\A}
\end{equation}
specifies the received packet in node \A. In \eqref{eq:HD_y_A}, $h_{\A}$ and $h_{\B}$ represent the coefficient of the flat fading channel between node \A and relay \R and node \B and relay \R. Moreover, $\beta$ is the variable gain of the relay to normalize the power of the received packet, $w_{\A}$ and $w_{\R}$ are the white Gaussian noise with a unique power in the node \A and the relay. The received Packet is written in node \B similarly. By simplifying the relationship and assuming that the powers of $x_{\A}$ and $x_{\B}$ are equal to one, the Signal to Noise Ratio (SNR) in nodes \A and \B is calculated as follows when $H_{\A}=|h_{\A}|^2$ is the power gain of the channel between node \A and the relay and $H_{\B}=|h_{\B}|^2$ is the power gain of the channel between node \B and the relay \cite{p47}.

% EQUATION ************************************************************************************************************
\begin{equation}\label{eq:HD_gamma_A}
\gamma_{\A}=\frac{P_{\B}P_{\R}H_{\A}H_{\B}}{P_{\R}H_{\A}+P_{\A}H_{\A}+P_{\B}H_{\B}+1}
\end{equation}

% EQUATION ************************************************************************************************************
\begin{equation}\label{eq:HD_gamma_B}
\gamma_{\B}=\frac{P_{\A}P_{\R}H_{\A}H_{\B}}{P_{\R}H_{\B}+P_{\B}H_{\B}+P_{\A}H_{\A}+1}
\end{equation}

In the two-way relay with FD function, the relay normalizing the power and re-sends the received packets to both nodes with power $P_{\R}$ in the same time. The two nodes receive the transmitted packet at the same time slot too. Both nodes and relays have self-interference in addition to noise because of simultaneous sending and receiving. Again, the nodes know their transmitted packets and they can remove them. Therefore, the received packet in node \A is

% EQUATION ************************************************************************************************************
\begin{equation}\label{eq:FD_y_A}
y_{\A}=\sqrt{P_{\R}}h_{\A}\beta\left(\sqrt{P_{\B}}h_{\B}x_{\B}+\sqrt{P_{\R}}\omega_{\R}x_{\R}+w_{\R}\right)+\sqrt{P_{\A}}\omega_{\A}x_{\A}+w_{\A}
\end{equation}
The received packet by node \B is re-written in the same manner. The repeated parameters are the same as the HD mode. $x_{\R}$ is the received packet by the relay after normalizing the power or the same as the transmitted packet by the relay. However, $0 \le\omega_{\A}\le 1$ and $0 \le\omega_{\R}\le 1$ indicate the coefficient of self-interference cancellation in node \A and relay \R. In other words, $\sqrt{P_{\A}}\omega_{\A}x_{\A}$ is the residual self-interference in node \A and $\sqrt{P_{\R}}\omega_{\R}x_{\R}$ represents the residual self-interference in relay. If $\omega_{\A}=\omega_{\R}=0$, then the self-interference is removed completely. Otherwise, If $\omega_{\A}=\omega_{\R}=1$, then there is no self-interference cancellation at node \A and relay and the FD relay is likely to work improperly. Therefore, interference management in FD relays is very important. It is worth noting to explain why it is assumed that each node can remove its packet re-sent through the relay but it cannot remove its self-interference completely. The justification is that self-interference enters the receiver with high power. Because of the indeterminate and nonlinear effects of different parts of the transmitter and receiver, it cannot be removed completely from the receiver. In contrast, the re-sent packet of each node to itself enters the receiver with low power. After the interference removal process, the residual part is insignificant and within the noise power limit. Therefore, it is not considered as a separate interference. By simplifying \eqref{eq:FD_y_A}, the Signal to Interference plus Noise Ratio (SINR) in nodes \A and \B appears as \cite{p47}:

% EQUATION ************************************************************************************************************
\begin{equation}\label{eq:FD_gamma_A}
\gamma_{\A}=\frac{P_{\B}P_{\R}H_{\A}H_{\B}}{P_{\R}^2H_{\A}\Omega_{\R}+P_{\R}H_{\A}+(P_{\A}\Omega_{\A}+1)(P_{\A}H_{\A}+P_{\B}H_{\B}+P_{\R}\Omega_{\R}+1)}
\end{equation}

% EQUATION ************************************************************************************************************
\begin{equation}\label{eq:FD_gamma_B}
\gamma_{\B}=\frac{P_{\A}P_{\R}H_{\A}H_{\B}}{P_{\R}^2H_{\B}\Omega_{\R}+P_{\R}H_{\B}+(P_{\B}\Omega_{\B}+1)(P_{\A}H_{\A}+P_{\B}H_{\B}+P_{\R}\Omega_{\R}+1)}
\end{equation}
Since the residual interference cannot be estimated, \eqref{eq:FD_gamma_A} and \eqref{eq:FD_gamma_B} use the mean residual power interference coefficient as $\Omega_{\R}=\mathbb{E}\{\omega_{\R}^2\}$, $\Omega_{\A}=\mathbb{E}\{\omega_{\A}^2\}$, and  $\Omega_{\B}=\mathbb{E}\{\omega_{\B}^2\}$. Then, the following assumptions are made to simplify the problem without losing its general concept.
$$m_{\A}=m_{\B}=m$$
$$P_{\A}=P_{\B}=P$$
$$\Omega_{\A}=\Omega_{\B}=\Omega_{\R}=\Omega$$

\subsection{Effective Capacity}\label{subsec:effective_capacity}
The effective capacity is defined as the input rate of the packets to a system's buffer, which can pass through the channel and a certain quality of service could be guaranteed for it \cite{p42m,p43m}. Quality of service refers to the statistical delay in the transmitter buffer, which is defined as \cite{p42m,p43m,p44m}:
% EQUATION ************************************************************************************************************
\begin{equation}
\mathbb{P}\{D>D_0\}\approx\eta\e^{-\theta D_0}
\end{equation}
where $\mathbb{P}\{.\}$ is the probability of occurrence, $D$ is the packet delay in the transmitting buffer, $D_0$ is a threshold for the delay, $\eta$ is the probability that the transmitter buffer is not empty and $\theta$ denotes the quality of service exponent. As the value of $\theta$ is set to a higher level, the probability of the delay is reduced and the quality of service will be stricter. In contrast, for a lower value of  $\theta$, the probability of delay is higher and the quality of service is not too strict. After specifying the amount of requested quality by each user using the parameter $\theta$, the effective capacity $R_E$ is calculated as:
% EQUATION ************************************************************************************************************
\begin{equation}\label{eq:general_effective_capacity}
R_E=-\frac{1}{m\theta}\log\left[\mathbb{E}\{\e^{-rm\theta}(1-\epsilon)+\epsilon\}\right].
\end{equation}
It should be noted that $\log(.)$ represents a natural logarithm function.

\subsection{Communication Channel}\label{subsec:comm_channel}
The communication channel between the nodes and the relay is considered  as Rayleigh flat fading. The channel coefficient between nodes \A and \B with relay \R is equal to $h_{\A}=h_{0,\A}d_{\A}^{-\alpha/2}$ and $h_{\B}=h_{0,\B}d_{\B}^{-\alpha/2}$ respectively. $h_{0,\A}$ and $h_{0,\B}$ represent uncorrelated complex Gaussian coefficients with zero mean and unit variance. Moreover, $d_{\A}$ and $d_{\B}$ determine the distance between nodes \A and \B to the relay and $d_{\A}+d_{\B}=1$. $\alpha$ denotes the path loss exponent of the channel. 

\subsection{Table of Parameters}\label{subsec:table_params}
For convenience, the main variables of the article are presented in Table \ref{tab:1}. To summarize some of the relations and also in Table \ref{tab:1}, the index $i$ displays one of the nodes \A or \B $(i\in\{\A,\B\})$. Moreover, $\bar{i}$ depicts the complementary of $i$ in the set. So, if $i=\A$, then $\bar{i}=\B$.

% TABLE ************************************************************************************************************
\begin{table}
  \centering
  \caption{Main parameters of this paper.}\label{tab:1}
  \begin{tabular}{|c|c|}
  \hline
  \textbf{Abbreviation} & \textbf{Description} \\
  \hline\hline
	$m$ & Short packet length \\
  \hline
    $\epsilon$ & Error probability of received short packet \\
  \hline
    $P$ & Transmitted power of node \A and \B \\
  \hline
    $P_{\R}$ & Transmitted power of relay \R \\
  \hline
	$P_{tot}$ & Total available power which is divided between nodes and relay \\
  \hline
	$\Omega$ & Mean residual power interference coefficient \\
  \hline
	$r_i$ & Throughput rate of node $i$ \\
  \hline 
	$H_i$ &  Power gain of the channel between node $i$ and relay \R \\
  \hline 
	$\gamma_i$ & SNR of node $i$ in HD mode or SINR of node $i$ in FD mode \\
  \hline
	$\theta_i$ & Quality of service exponent in node $i$ \\
  \hline
	$R_{E,i}$ & Effective capacity of node $i$ \\
  \hline  
\end{tabular}
\end{table}

% ***********************************************
% SECTION 4 *************************************
% ***********************************************

\section{Maximizing the Effective Capacity of the Two-Way Relay}\label{sec:maximizing_effective_capacity}

In this section, the appropriate power allocation between nodes \A and \B and relay \R is calculated. The allocated power maximize the effective capacity when short packet is transmitted by node \A and node \B. It is assumed that the total available power i.e. $P_{tot}$ is optimally distributed between the nodes and relay. This type of power allocation is commonly used in FD amplify and forward relay \cite{p48}, in FD wireless power communication networks \cite{p49} and two-way fixed-gain amplify and forward relay \cite{p50}. Therefore, $P_{\R}+2P=P_{tot}$. Here, the problem is provided and solved for the HD and FD two-way relays individually. 

The condition of two nodes \A and \B are different and their requested rate and quality of service is not necessarily the same. Therefore, the optimization problem is considered and solved as a multi-objective problem. In the multi-objective optimization problem, there are two or more objective functions to be optimized instead of one single objective function. These functions are usually not aligned and do not optimize at the same point. Therefore, the multi-objective optimization problems have a set of optimal solutions called Pareto optimal solutions instead of a specific optimal solution \cite{p26}. In multi-objective minimization (multi-objective maximization), the optimal Pareto point is selected when there is no other point in the feasible design space in which the value of all objective functions is less than or equal to (greater than or equal to) the value of objective functions at the optimal Pareto point \cite{p26}. Finally, the expert user suggests a point as an answer according to the priority of the objective functions if it is necessary. Solving the multi-objective optimization problem has various techniques. For more details see \cite{p26}.

Before designing and solving the problem, lemma \ref{lem:1} is addressed. This theorem will be necessary for solving the discussed problems. 
% LEMMA ************************************************************************************************************
\begin{lemma}\label{lem:1}
If $\gamma>1$, $m>100$ and $\epsilon>10^{-23}$, the throughput rate in \eqref{eq:r} is strictly increasing and concave in terms of $\gamma$.
$$\forall\{\gamma>1,~m>100,~\epsilon>10^{-23}\}\implies\left\{\frac{\partial r}{\partial\gamma}>0,~\frac{\partial^2r}{\partial\gamma^2}\le0\right\}$$
\end{lemma}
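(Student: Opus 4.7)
I will work with the dispersion function $V(\gamma)=\gamma(\gamma+2)/(\gamma+1)^2=1-(\gamma+1)^{-2}$ and the bookkeeping constant $K:=Q^{-1}(\epsilon)/\sqrt{m}$, so that \eqref{eq:r} reads $r(\gamma)=\log_2(1+\gamma)-(K/\ln 2)\sqrt{V(\gamma)}+\log_2(m)/m$. The additive constant plays no role in the derivatives. The hypotheses $m>100$ and $\epsilon>10^{-23}$ enter only through the bound $K<1$, which I will verify via Mills' inequality $Q(x)<\e^{-x^2/2}/(x\sqrt{2\pi})$ at $x=10$: this gives $Q(10)<\e^{-50}/(10\sqrt{2\pi})<10^{-23}$, hence $Q^{-1}(10^{-23})<10$, and combined with $\sqrt{m}>10$ yields $K<1$.

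\textbf{Monotonicity.} A single differentiation, using $V'(\gamma)=2(\gamma+1)^{-3}$ and the identity $(\gamma+1)^2\sqrt{V(\gamma)}=(\gamma+1)\sqrt{\gamma(\gamma+2)}$, collapses $\partial r/\partial\gamma>0$ into
\[
(\gamma+1)\sqrt{\gamma(\gamma+2)}\;>\;K.
\]
For $\gamma>1$ the left side already exceeds $2\sqrt{3}\approx 3.46$ and is strictly increasing, so $K<1$ is comfortable.

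\textbf{Concavity and the main obstacle.} The second derivative is the delicate part. A direct computation gives $(\sqrt{V})''=-\bigl[3(\gamma+1)^2-2\bigr]/\bigl[(\gamma+1)^3(\gamma(\gamma+2))^{3/2}\bigr]<0$, so $-\sqrt{V}$ is convex and pushes against the concavity of $\log_2(1+\gamma)$. Combining both contributions, $\partial^2 r/\partial\gamma^2\le 0$ reduces to
\[
K\bigl(3(\gamma+1)^2-2\bigr)\;\le\;(\gamma+1)\bigl(\gamma(\gamma+2)\bigr)^{3/2}.
\]
This is the tight spot: at $\gamma=1$ the inequality reads $K\le 6\sqrt{3}/10\approx 1.039$, only barely above the $K<1$ I have available, which is precisely why the hypotheses on $m$ and $\epsilon$ must be this strong. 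My plan is to reduce the full range $\gamma>1$ to the single point $\gamma=1$ by proving that $h(\gamma):=(\gamma+1)(\gamma(\gamma+2))^{3/2}/\bigl[3(\gamma+1)^2-2\bigr]$ is strictly increasing on $(1,\infty)$. Taking a logarithmic derivative and exploiting the clean algebraic identity $3[(\gamma+1)^2-1]+1=3(\gamma+1)^2-2$ to combine the two cross-terms, the result should simplify to
\[
(\log h)'(\gamma)=\frac{1}{\gamma+1}+\frac{3(\gamma+1)^3}{\gamma(\gamma+2)\bigl(3(\gamma+1)^2-2\bigr)}\;>\;0,
\]
manifestly positive. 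Verifying this cancellation is the one genuinely delicate step; everything else is routine. Once it is in hand, $h(\gamma)\ge h(1)=6\sqrt{3}/10>1>K$ on all of $\gamma>1$, closing concavity.
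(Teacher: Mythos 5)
Your proof is correct, and it is genuinely different from what the paper does: the paper offers no argument at all for Lemma~\ref{lem:1}, simply deferring to reference \cite{p37m}, whereas you give a complete, self-contained, elementary derivation. I checked the one step you flagged as delicate: with $2\gamma(\gamma+2)+(\gamma+1)^2=3(\gamma+1)^2-2$ one indeed gets $(\sqrt V)''=-[3(\gamma+1)^2-2]/[(\gamma+1)^3(\gamma(\gamma+2))^{3/2}]$, and the two cross-terms in $(\log h)'$ combine over the common denominator with numerator $3(\gamma+1)^2-2-2\gamma(\gamma+2)=(\gamma+1)^2$, which yields exactly the manifestly positive expression you wrote; so $h$ is increasing and $h(\gamma)\ge h(1)=6\sqrt3/10>1>K$ closes concavity. (The argument also survives the corner case $\epsilon>1/2$, where $K<0$ and both inequalities hold trivially.) What your route buys over the citation is transparency about where the hypotheses actually bite: the entire content of $m>100$ and $\epsilon>10^{-23}$ is the single bound $K=Q^{-1}(\epsilon)/\sqrt m<1$, and the concavity margin at the worst point $\gamma=1$ is only $6\sqrt3/10\approx1.039$ versus $K<1$, which explains quantitatively why the lemma's thresholds are as tight as they are — something the paper's surrounding discussion only asserts informally. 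The only cosmetic criticism is that phrases like ``the result should simplify to'' understate what is in fact a verified identity; state it as proved.
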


\begin{proof}
Please refer to \cite{p37m}.
\end{proof}

In most communication systems, a high SNR (about 10 and above) is desired under normal constraints. Therefore, in most cases we have $\gamma>1$. The constraint $m>100$ is not restrictive at all. That is because $100$ bits equals $12.5$ bytes, which is not a significant value in the transmission of the short packets. The error rate of $\epsilon=10^{-23}$ is insignificant and it is never required. Thus, the constraint of lemma \ref{lem:1} is not strict, and lemma \ref{lem:1} is always approved. 

\subsection{Multi-Objective Optimization Problem in the Two-Way HD Relay}\label{subsec:multi_objective_HD}

The multi-objective optimization problem in a two-way HD relay with short packet transmission to maximize the effective capacity of two nodes is as follows:
% EQUATION ************************************************************************************************************
\begin{equation}\label{eq:HD_multi_objective_problem_1}
\left\{ \begin{array}{l}
\mathop {\max }\limits_{{P_{\R}}} {\rm{   }}{R_{E,{\A}}}\\
\mathop {\max }\limits_{{P_{\R}}} {\rm{   }}{R_{E,{\B}}}\\
{\rm{s}}{\rm{.t}}{\rm{.   }}{P_{\R}} < {P_{tot}}
\end{array} \right.
\end{equation}
In the optimization problem \eqref{eq:HD_multi_objective_problem_1}, the effective capacity of the nodes is written according to \eqref{eq:general_effective_capacity} as follows:
% EQUATION ************************************************************************************************************
\begin{equation}\label{eq:HD_effective_capacity}
R_{E,i}=-\frac{1}{m\theta_i}\log\left[\mathbb{E}\{\e^{-r_i\frac{m}{2}\theta_i}(1-\epsilon_i)+\epsilon_i\}\right],~~~i\in\{\A,\B\}.
\end{equation}
We use $\frac{m}{2}$ in \eqref{eq:HD_effective_capacity} because in the HD relay, the number of using channel $m$ is twice the length of the transmitted packet. Now, it is proved that the optimization problem \eqref{eq:HD_multi_objective_problem_1} is concave. Accordingly, first lemma \ref{lem:2} and \ref{lem:3} are proved, and finally by lemma \ref{lem:4}, the concavity of the problem \eqref{eq:HD_multi_objective_problem_1} is presented. 
% LEMMA ************************************************************************************************************
\begin{lemma}\label{lem:2}
The SNR is concave in terms of the relay power $P_{\R}$ and only has a single maximum point. That is:
$$\frac{\partial^2\gamma_i}{\partial P_{\R}^2}<0,~~~i\in\{\A,\B\}$$
\end{lemma}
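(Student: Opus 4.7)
The plan is to first eliminate $P$ from the SNR expressions by invoking the total-power constraint $P_{\R}+2P=P_{tot}$ introduced at the start of Section~4, substituting $P=(P_{tot}-P_{\R})/2$ into \eqref{eq:HD_gamma_A}--\eqref{eq:HD_gamma_B}. After this substitution each $\gamma_i$ becomes a function of the single variable $P_{\R}\in[0,P_{tot}]$, expressible as a ratio $u(P_{\R})/v(P_{\R})$ in which $u$ is a quadratic in $P_{\R}$ with strictly negative leading coefficient (namely $-\tfrac{1}{2}H_{\A}H_{\B}$) and $v$ is a linear polynomial in $P_{\R}$ that stays strictly positive on $[0,P_{tot}]$.

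Next I would differentiate twice using the quotient rule. Because $v$ is affine, $v''=0$ and the formula collapses to
\[
\frac{d^2}{dP_{\R}^2}\!\left(\frac{u}{v}\right) = \frac{u''v^2 - 2u'v'v + 2u(v')^2}{v^3}.
\]
Writing $u=\alpha P_{\R}^2+\beta P_{\R}$ and $v=\mu P_{\R}+\nu$, a direct expansion shows that the $P_{\R}^2$ and $P_{\R}$ contributions in the numerator cancel identically, so the numerator reduces to the constant $2\nu(\alpha\nu-\mu\beta)$. Substituting the specific coefficients for $\gamma_{\A}$, namely $\alpha=-\tfrac{1}{2}H_{\A}H_{\B}$, $\beta=\tfrac{P_{tot}}{2}H_{\A}H_{\B}$, $\mu=(H_{\A}-H_{\B})/2$ and $\nu=\tfrac{P_{tot}}{2}(H_{\A}+H_{\B})+1$, yields $\alpha\nu-\mu\beta=-\tfrac{1}{2}H_{\A}H_{\B}(1+P_{tot}H_{\A})$, which is manifestly negative. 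Since $\nu>0$ and $v^3>0$ on the feasible interval, this establishes $\partial^2\gamma_{\A}/\partial P_{\R}^2<0$ strictly, and a symmetric computation with $H_{\A}$ and $H_{\B}$ interchanged handles $\gamma_{\B}$.

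For the ``single maximum point'' conclusion I would note that $\gamma_i(0)=0$ (the factor $P_{\R}$ in the numerator vanishes) and $\gamma_i(P_{tot})=0$ (the factor $P$ in the numerator vanishes), whereas $\gamma_i>0$ throughout the open interval $(0,P_{tot})$. Combined with the strict concavity just established, this forces the maximum to be attained at a unique interior point, which is pinned down by the first-order condition $\partial\gamma_i/\partial P_{\R}=0$.

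The main obstacle I anticipate is the algebraic bookkeeping in the second-derivative expansion: the sign of the coefficient $\mu=(H_{\A}-H_{\B})/2$ is indeterminate, so some care is required to verify that the final sign of $\alpha\nu-\mu\beta$ does not depend on the ordering of $H_{\A}$ and $H_{\B}$. The cancellation of the $P_{\R}^2$ and $P_{\R}$ terms in the numerator of $g''$ is the critical algebraic step; once that cancellation is in hand, the sign is determined by a single constant and the rest of the argument is mechanical.
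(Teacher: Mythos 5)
Your proof is correct, and for the concavity half it is essentially the paper's own argument: substitute $P=(P_{tot}-P_{\R})/2$ and compute $\partial^2\gamma_{\A}/\partial P_{\R}^2$ directly. Your constant $2\nu(\alpha\nu-\mu\beta)=-\nu H_{\A}H_{\B}(H_{\A}P_{tot}+1)$ over $v^3$ agrees exactly with the paper's expression \eqref{eq:13} (whose denominator is $(2v)^3$ and whose numerator carries the compensating factor $2\nu=H_{\A}P_{tot}+H_{\B}P_{tot}+2$), and your observation that the sign is independent of the ordering of $H_{\A}$ and $H_{\B}$ is the right thing to check. Where you diverge is the uniqueness of the maximizer: you argue it abstractly from $\gamma_i(0)=\gamma_i(P_{tot})=0$, positivity on the open interval, and strict concavity, whereas the paper sets the first derivative \eqref{eq:14} to zero, solves the resulting quadratic, and discards the root at which the second derivative would be positive, arriving at the closed form $P_{\R}^{\star\A}$ in \eqref{eq:15}. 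Your route is cleaner and avoids the root-selection argument, but note that the explicit formula \eqref{eq:15} is not a throwaway: it is reused later as the warm start $P_{\R}=wP_{\R}^{\star\A}+(1-w)P_{\R}^{\star\B}$ for the numerical optimization, so the paper's extra algebra buys a concrete byproduct that your existence-and-uniqueness argument does not produce.
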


\begin{proof}
For ease of writing, the lemma is proved for node \A. The relations can also be simply written and proved for node \B. Using \eqref{eq:HD_gamma_A}, the second-order derivative of $\gamma_{\A}$ relative to $P_{\R}$ is:
% EQUATION ************************************************************************************************************
\begin{equation}\label{eq:13}
\frac{{{\partial ^2}{\gamma _{\A}}}}{{\partial {P_{\R}}^2}} =  - \frac{{4{H_{\A}}{H_{\B}}\left( {{H_{\A}}{P_{tot}} + 1} \right)\left( {{H_{\A}}{P_{tot}} + {H_{\B}}{P_{tot}} + 2} \right)}}{{{{\left( {{H_{\A}}({P_{tot}} + {P_{\R}}) + {H_{\B}}({P_{tot}} - {P_{\R}}) + 2} \right)}^3}}}
\end{equation}
Since $P_{\R}<P_{tot}$, the denominator of \eqref{eq:13} is always positive. Thus, the second derivative in \eqref{eq:13} is constantly negative and $\gamma_{\A}$ is concave relative to $P_{\R}$. Now, the first derivative is as follows:
% EQUATION ************************************************************************************************************
\begin{equation}\label{eq:14}
%\begin{array}{l}
\frac{{\partial {\gamma _{\A}}}}{{\partial {P_{\R}}}} = \frac{{{H_{\A}}{H_{\B}}({P_{tot}} - 2{P_{\R}})}}{{{H_{\A}}({P_{tot}} + {P_{\R}}) + {H_{\B}}({P_{tot}} - {P_{\R}}) + 2}}
- \frac{{{H_{\A}}{H_{\B}}{P_{\R}}({H_{\A}} - {H_{\B}})({P_{tot}} - {P_{\R}})}}{{{{\left( {{H_{\A}}({P_{tot}} + {P_{\R}}) + {H_{\B}}({P_{tot}} - {P_{\R}}) + 2} \right)}^2}}}
%\end{array}	 
\end{equation}
If \eqref{eq:14} is set equal to zero, a quadratic equation is obtained with two roots. These two roots must be real. By calculating the roots and the second-order derivative in both roots, one of the roots will be acceptable because the second-order derivative becomes positive in the other root, which is contrary to the second-order derivative under relation \eqref{eq:13}. $P_{\R}^{\star\A}$ in \eqref{eq:15} is the optimal power of the relay, which makes the first-order derivative zero and maximizes $\gamma_{\A}$.
% EQUATION ************************************************************************************************************
\begin{equation}\label{eq:15}
P_{\R}^{ \star \A} = 
 - \frac{{{H_{\A}}{P_{tot}} + {H_{\B}}{P_{tot}} + 2 - \sqrt {2({H_{\A}}{P_{tot}} + 1)({H_{\A}}{P_{tot}} + {H_{\B}}{P_{tot}} + 2)} }}{{{H_{\A}} - {H_{\B}}}}
\end{equation}
\end{proof}

% LEMMA ************************************************************************************************************
\begin{lemma}\label{lem:3}
$R_{E,\A}$ relative to $r_{\A}$ and $R_{E,\B}$ relative to $r_{\B}$ is strictly increasing and concave. This means that:
$$\frac{\partial R_{E,i}}{\partial r_i}>0,~~~\frac{\partial^2R_{E,i}}{\partial r_i^2}<0,~~~i\in\{\A,\B\}.$$
\end{lemma}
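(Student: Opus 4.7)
The plan is to treat $R_{E,i}$, as given by \eqref{eq:HD_effective_capacity}, as a differentiable function of the scalar variable $r_i$ and verify the two sign conditions by direct calculation. The only $r_i$-dependent piece sitting inside the expectation is $g(r_i):=(1-\epsilon_i)\e^{-r_i m\theta_i/2}+\epsilon_i$. Either I view the derivative pointwise in the channel realization (so the expectation commutes with $\partial/\partial r_i$), or I absorb the expectation by linearity and observe that $\mathbb{E}[g(r_i)]$ retains the same analytical form $A\e^{-br}+B$ with $A,B>0$ and $b=m\theta_i/2$. Either way the task reduces to analyzing the scalar function $h(r):=-\frac{1}{m\theta_i}\log\bigl[(1-\epsilon_i)\e^{-rm\theta_i/2}+\epsilon_i\bigr]$.

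For the monotonicity part I would apply the chain rule to obtain
$$h'(r)=\frac{1}{2}\cdot\frac{(1-\epsilon_i)\e^{-rm\theta_i/2}}{(1-\epsilon_i)\e^{-rm\theta_i/2}+\epsilon_i},$$
which is strictly positive for $\epsilon_i\in(0,1)$ (both numerator and denominator are positive, and the ratio actually lies in $(0,1/2)$), establishing $\partial R_{E,i}/\partial r_i>0$. For concavity I would differentiate once more. Setting $u=(1-\epsilon_i)\e^{-rm\theta_i/2}$ and $v=u+\epsilon_i$, the identity $v'=u'$ collapses the quotient rule to
$$\left(\frac{u}{v}\right)'=\frac{u'(v-u)}{v^{2}}=\frac{u'\,\epsilon_i}{v^{2}},$$
which is strictly negative since $u'=-(m\theta_i/2)(1-\epsilon_i)\e^{-rm\theta_i/2}<0$ while $\epsilon_i,v^{2}>0$. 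Hence $h''(r)<0$, which delivers $\partial^{2}R_{E,i}/\partial r_i^{2}<0$ as required.

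The computation is almost entirely bookkeeping, and the only real hazard is an arithmetic slip with the factor $m\theta_i/2$ in the chain-rule step or a sign error when pulling the minus out of the outer logarithm. A cleaner conceptual route, which I would favor if I wanted to avoid the quotient rule entirely, is to invoke the standard log-sum-exp fact that $r\mapsto\log(A\e^{-br}+B)$ is strictly convex whenever $A,B,b>0$; multiplying by the negative constant $-1/(m\theta_i)$ then gives strict concavity of $R_{E,i}$ in $r_i$ in one line. This route also removes any subtlety about pushing the derivative through the expectation, because the $A\e^{-br}+B$ shape is preserved by taking $\mathbb{E}[\cdot]$.
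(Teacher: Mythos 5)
Your proof is correct and follows essentially the same route as the paper: direct differentiation of \eqref{eq:HD_effective_capacity} yielding the first derivative $\tfrac{1}{2}\,u/(u+\epsilon_i)>0$ and the second derivative $-\tfrac{m\theta_i\epsilon_i}{4}\,u/(u+\epsilon_i)^2<0$, exactly matching the paper's \eqref{eq:16} and \eqref{eq:17}. Your explicit remark that the expectation preserves the $A\e^{-br}+B$ form (and the log-sum-exp convexity shortcut) is a slightly cleaner packaging of the same calculation, not a different argument.
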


\begin{proof}
The first and second-order derivatives are calculated according to \eqref{eq:16} and \eqref{eq:17}. In \eqref{eq:16}, since $\epsilon_i\le0.5$  and the exponential function is always positive, the numerator and denominator of the relation are positive. Therefore, the first-order derivative is positive and the effective capacity is strictly increasing relative to the throughput rate. Therefore, the second-order derivative is also negative and the effective capacity is concave relative to the throughput rate.
% EQUATION ************************************************************************************************************
\begin{equation}\label{eq:16}
\frac{{\partial {R_{E,i}}}}{{\partial {r_i}}} = \frac{1}{2}\frac{{\mathbb{E}\left\{ {{{\mathop{\rm e}\nolimits} ^{ - {r_i}\frac{m}{2}{{{\theta_i}}}}}\left( {1 - {\epsilon _i}} \right)} \right\}}}{{\mathbb{E}\left\{ {{{\mathop{\rm e}\nolimits} ^{ - {r_i}\frac{m}{2}{{\theta_i}}}}\left( {1 - {\epsilon _i}} \right) + {\epsilon _i}} \right\}}},{\rm{ }}i \in \{ A,B\}
\end{equation}

% EQUATION ************************************************************************************************************
\begin{equation}\label{eq:17}
\begin{array}{l}
\frac{{{\partial ^2}{R_{E,i}}}}{{\partial r_i^2}} = \\
{\rm{  }}\frac{1}{2}\frac{{\frac{{ - {\theta _i}m}}{2}\mathbb{E}\left\{ {{{\mathop{\rm e}\nolimits} ^{ - {r_i}\frac{m}{2}{{{\theta_i}}}}}\left( {1 - {\epsilon _i}} \right)} \right\}\left( {\mathbb{E}\left\{ {{{\mathop{\rm e}\nolimits} ^{ - {r_i}\frac{m}{2}{{{\theta_i}}}}}\left( {1 - {\epsilon _i}} \right)} \right\} + {\epsilon _i}} \right)}}{{{{\left( {\mathbb{E}\left\{ {{{\mathop{\rm e}\nolimits} ^{ - {r_i}\frac{m}{2}{{{\theta_i}}}}}\left( {1 - {\epsilon _i}} \right) + {\epsilon _i}} \right\}} \right)}^2}}} + 
{\rm{  }}\frac{1}{2}\frac{{\frac{{{\theta _i}m}}{2}{{\left( {\mathbb{E}\left\{ {{{\mathop{\rm e}\nolimits} ^{ - {r_i}\frac{m}{2}{{{\theta_i}}}}}\left( {1 - {\epsilon _i}} \right)} \right\}} \right)}^2}}}{{{{\left( {\mathbb{E}\left\{ {{{\mathop{\rm e}\nolimits} ^{ - {r_i}\frac{m}{2}{{{\theta_i}}}}}\left( {1 - {\epsilon _i}} \right) + {\epsilon _i}} \right\}} \right)}^2}}} = \\
{\rm{  }}\frac{{ - {\theta _i}m{\epsilon _i}}}{4}\frac{{\mathbb{E}\left\{ {{{\mathop{\rm e}\nolimits} ^{ - {r_i}\frac{m}{2}{{{\theta_i}}}}}\left( {1 - {\epsilon _i}} \right)} \right\}}}{{{{\left( {\mathbb{E}\left\{ {{{\mathop{\rm e}\nolimits} ^{ - {r_i}\frac{m}{2}{{{\theta_i}}}}}\left( {1 - {\epsilon _i}} \right) + {\epsilon _i}} \right\}} \right)}^2}}},{\rm{ }}i \in \{ \A,\B\} 
\end{array}
\end{equation}

\end{proof}

% LEMMA ************************************************************************************************************
\begin{lemma}\label{lem:4}
If lemma \ref{lem:1} holds, the effective capacity is concave relative to the relay power $P_{\R}$ and only has one maximum point. This means that:
$$\frac{\partial^2R_{E,i}}{\partial P_{\R}^2}<0,~~~i\in\{\A,\B\}$$.
\end{lemma}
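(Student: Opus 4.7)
The plan is to view $R_{E,i}$ as a three-level composition $R_{E,i}(r_i(\gamma_i(P_{\R})))$ and grind the chain rule, letting Lemmas \ref{lem:1}, \ref{lem:2}, and \ref{lem:3} do the sign work. I would treat $r_i$ as a single variable exactly as Lemma \ref{lem:3} does (the derivatives in \eqref{eq:16}--\eqref{eq:17} are of this shape), so the analytical objects I need have already been produced by the earlier lemmas.

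The first step is to write the chain rule twice. For the inner composition,
\begin{equation*}
\frac{\partial^2 r_i}{\partial P_{\R}^2}
= \frac{\partial^2 r_i}{\partial \gamma_i^2}\left(\frac{\partial \gamma_i}{\partial P_{\R}}\right)^{\!2}
+ \frac{\partial r_i}{\partial \gamma_i}\cdot\frac{\partial^2 \gamma_i}{\partial P_{\R}^2},
\end{equation*}
and for the outer composition,
\begin{equation*}
\frac{\partial^2 R_{E,i}}{\partial P_{\R}^2}
= \frac{\partial^2 R_{E,i}}{\partial r_i^2}\left(\frac{\partial r_i}{\partial P_{\R}}\right)^{\!2}
+ \frac{\partial R_{E,i}}{\partial r_i}\cdot\frac{\partial^2 r_i}{\partial P_{\R}^2}.
\end{equation*}
Now I feed in signs. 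Lemma \ref{lem:1} gives $\partial r_i/\partial \gamma_i > 0$ and $\partial^2 r_i/\partial \gamma_i^2 \le 0$; Lemma \ref{lem:2} gives $\partial^2 \gamma_i/\partial P_{\R}^2 < 0$. Combining, the first expression is (non-positive)$\cdot$(non-negative) plus (positive)$\cdot$(negative), so $\partial^2 r_i/\partial P_{\R}^2 < 0$. Then Lemma \ref{lem:3} gives $\partial R_{E,i}/\partial r_i > 0$ and $\partial^2 R_{E,i}/\partial r_i^2 < 0$, so the second expression is (negative)$\cdot$(non-negative) plus (positive)$\cdot$(negative), which is strictly negative. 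This yields the concavity claim.

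For the uniqueness of the maximizer I would note that $R_{E,i}$ is strictly increasing in $r_i$ (Lemma \ref{lem:3}) and $r_i$ is strictly increasing in $\gamma_i$ (Lemma \ref{lem:1}), so $R_{E,i}$ is monotone in $\gamma_i$; therefore $R_{E,i}$ attains its maximum over $(0, P_{tot})$ at exactly the unique interior critical point of $\gamma_i$, namely $P_{\R}^{\star i}$ identified in \eqref{eq:15}. Concavity just established rules out any other stationary point.

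The only real friction I anticipate is the presence of the expectation $\mathbb{E}\{\cdot\}$ in \eqref{eq:HD_effective_capacity}: strictly speaking, $r_i$ inside the expectation is a random quantity depending on $H_{\A},H_{\B}$ through $\gamma_i$, so the composition argument must be applied realization-wise before invoking the expectation. However, since Lemma \ref{lem:3} already phrases the derivatives with the $\mathbb{E}\{\cdot\}$ kept in place and treats $r_i$ as a single differentiation variable, my plan is to adopt the same convention and let the chain rule pass through the expectation term by term; no new analytic obstacle appears beyond the three already-established lemmas.
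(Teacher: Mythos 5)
Your proposal is correct and follows essentially the same route as the paper: a two-level application of the chain rule through $R_{E,i}\to r_i\to\gamma_i\to P_{\R}$, with the signs supplied by Lemmas \ref{lem:1}, \ref{lem:2}, and \ref{lem:3}, and uniqueness of the maximizer read off from the first-derivative product vanishing only at $P_{\R}^{\star i}$. Your handling of the non-strict inequality $\partial^2 r_i/\partial\gamma_i^2\le 0$ and your remark about the expectation being carried through term by term match (and slightly tighten) the paper's own argument.
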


\begin{proof}
This lemma is again proved for node \A. Proof for node \B is done in the same manner. According to the chain rule, the second-order derivative of $R_{E,\A}$ relative to $P_{\R}$ equals
% EQUATION ************************************************************************************************************
\begin{equation}\label{eq:18}
\frac{\partial^2 R_{E,\A}}{\partial P_{\R}^2}=\frac{\partial^2 R_{E,\A}}{\partial r_{\A}^2}\left(\frac{\partial r_{\A}}{\partial P_{\R}}\right)^2+\frac{\partial^2 r_{\A}}{\partial P_{\R}^2}+\frac{\partial R_{E,\A}}{\partial r_{\A}}.
\end{equation}
On the right side of \eqref{eq:18}, according to lemma \ref{lem:3}, the first fraction is negative. The second fraction is always positive. The fourth fraction is also positive according to lemma \ref{lem:3}. If it is proved that the third fraction is constantly negative, the negation of \eqref{eq:18} will be proved. Therefore, using the chain rule, the third fraction is rewritten as:
% EQUATION ************************************************************************************************************
\begin{equation}\label{eq:19}
\frac{\partial^2 r_{\A}}{\partial P_{\R}^2}=\frac{\partial^2 r_{\A}}{\partial \gamma_{\A}^2}\left(\frac{\partial \gamma_{\A}}{\partial P_{\R}}\right)^2+\frac{\partial^2 \gamma_{\A}}{\partial P_{\R}^2}+\frac{\partial r_{\A}}{\partial \gamma_{\A}}.
\end{equation}
On the right side of \eqref{eq:19}, according to lemma \ref{lem:1}, the first fraction is negative and the fourth fraction is positive. According to lemma \ref{lem:2}, the second fraction is also positive and the third fraction is negative. Therefore, \eqref{eq:19} is negative and it is possible to claim that the effective capacity of node \A is concave relative to the power of the relay. Now, the first-order derivative is written as follows:
% EQUATION ************************************************************************************************************
\begin{equation}\label{eq:20}
\frac{\partial R_{E,\A}}{\partial P_{\R}}=\frac{\partial R_{E,\A}}{\partial r_{\A}}\frac{\partial r_{\A}}{\partial \gamma_{\A}}\frac{\partial \gamma_{\A}}{\partial P_{\R}}.
\end{equation}
On the right side of \eqref{eq:20}, according to lemma \ref{lem:3}, the first fraction is non-zero. According to lemma \ref{lem:2}, the second fraction is not zero. Based on lemma \ref{lem:2}, the third fraction becomes zero only at the point $P_{\R}^{\star\A}$. Therefore, it can be concluded that the effective capacity of node \A is concave relative to $P_{\R}$ and reaches its maximum value only at the point $P_{\R}^{\star\A}$.
\end{proof}

Since the functions of the multi-objective problem \eqref{eq:HD_multi_objective_problem_1} are concave, the weighted sum is a proper solution for them. The weighted sum method is very simple and it is the necessary and sufficient condition to extract all the optimal Pareto points in the case of multi-objective concave (or convex) problems \cite{p26}. In the weighted sum method, the first objective function of the optimization problem \eqref{eq:HD_multi_objective_problem_1} with the coefficient $w$ and the second objective function with the coefficient $1-w$ are summed up and they form a new objective function. The coefficient $0\le w\le1$ presents the priority of the first objective function and the coefficient $0\le 1-w\le1$ indicates the priority value of the second objective function. Accordingly, the optimization problem \eqref{eq:HD_multi_objective_problem_1} is simplified to
% EQUATION ************************************************************************************************************
\begin{equation}\label{eq:HD_multi_objective_problem_2}
\begin{array}{l}
\left\{ \begin{array}{l}
\mathop {\min }\limits_{{P_{\R}}} {\rm{   }}J\\
{\rm{s}}{\rm{.t}}{\rm{.   }}{P_{\R}} < {P_{tot}}
\end{array} \right.,\\
J = \frac{w}{{m{\theta _A}}}\log \left[ {\mathbb{E}\left\{ {{{\mathop{\rm e}\nolimits} ^{ - {r_{\A}}\frac{m}{2}{{{\theta_{\A}}}}}}\left( {1 - {\epsilon _{\A}}} \right) + {\epsilon _{\A}}} \right\}} \right]\\
{\rm{  }} + \frac{{1 - w}}{{m{\theta _{\B}}}}\log \left[ {\mathbb{E}\left\{ {{{\mathop{\rm e}\nolimits} ^{ - {r _{\B}}\frac{m}{2}{{{\theta_{\B}}}}}}\left( {1 - {\epsilon _{\B}}} \right) + {\epsilon _{\B}}} \right\}} \right].
\end{array}
\end{equation}
It should be noted that the first and second objective functions of the problem \eqref{eq:HD_multi_objective_problem_1} are summed up with the negative sign. Thus, maximization of the problem \eqref{eq:HD_multi_objective_problem_1} is minimized in problem \eqref{eq:HD_multi_objective_problem_2}.

Due to the statistical expectation in the definition of effective capacity, the numerical solution of problem \eqref{eq:HD_multi_objective_problem_2} is time-consuming. Therefore, the problem is simplified to achieve a faster solution. To simplify $J$ in \eqref{eq:HD_multi_objective_problem_2}, due to the small probability of error, $\epsilon_{\A}$ and $\epsilon_{\B}$ are ignored and $J$ is simplified to
% EQUATION ************************************************************************************************************
\begin{equation}\label{eq:HD_J_1}
J \approx \frac{w}{{m{\theta _{\A}}}}\log \left[ {\mathbb{E}\left\{ {{{\mathop{\rm e}\nolimits} ^{ - {r _{\A}}\frac{m}{2}{{{\theta_{\A}}}}}}\left( {1 - {\epsilon _{\A}}} \right)} \right\}} \right]
+ \frac{{1 - w}}{{m{\theta _{\B}}}}\log \left[ {\mathbb{E}\left\{ {{{\mathop{\rm e}\nolimits} ^{ - {r _{\B}}\frac{m}{2}{{{\theta_{\B}}}}}}\left( {1 - {\epsilon _{\B}}} \right)} \right\}} \right].
\end{equation}
Then, using the approximation of logarithm of the sum of the exponential functions according to (23) \cite{p46m},
% EQUATION ************************************************************************************************************
\begin{equation}\label{eq:log_sum_approx}
\log\mathbb{E}\{\e^z\}\approx \max\{z\}+C
\end{equation}
$J$ is again simplified to
% EQUATION ************************************************************************************************************
\begin{equation}\label{eq:HD_J_2}
\begin{array}{l}
J \approx \frac{w}{{m{\theta _{\A}}}}\left[ {\max\left\{ {{{ - {r _{\A}}\frac{m}{2}{{{\theta_{\A}}}}}}+C+\log\left( {1 - {\epsilon _{\A}}} \right)} \right\}} \right]
+ \frac{{1 - w}}{{m{\theta _{\B}}}} \left[  {\max\left\{ {{{ - {r _{\B}}\frac{m}{2}{{{\theta_{\B}}}}}}+C+\log\left( {1 - {\epsilon _{\B}}} \right)} \right\}} \right]\\
=\left[\max\{-\frac{w}{2}r_{\A}\}+\frac{w}{m\theta_{\A}}C+\frac{w}{m\theta_{\A}}\log(1-\epsilon_{\A})\right]+
\left[\max\{-\frac{1-w}{2}r_{\B}\}+\frac{1-w}{m\theta_{\B}}C+\frac{1-w}{m\theta_{\B}}\log(1-\epsilon_{\B})\right]
\end{array}
\end{equation}
In \eqref{eq:log_sum_approx}, $z\in\mathbb{Z}$ denotes an arbitrary random variable and $\max{z}$ specifies the largest value of a random variable in the set $\mathbb{Z}$. In addition, $C$ is a constant value and does not depend on a random variable  . This constant value also appears in \eqref{eq:HD_J_2}. Now, $J$ is inserted in the optimization problem \eqref{eq:HD_multi_objective_problem_2} and the simplified problem is written as:
% EQUATION ************************************************************************************************************
\begin{equation}\label{eq:HD_multi_objective_problem_3}
\left\{ \begin{array}{l}
\mathop {\min }\limits_{{P_{\R}}} ~~~\left\{\max\{-\frac{w}{2}r_{\A}\}+\max\{-\frac{1-w}{2}r_{\B}\}+\prime{C}\right\}\\
{\rm{s}}{\rm{.t}}{\rm{.   }}{P_{\R}} < {P_{tot}}
\end{array} \right.
\end{equation}
and $\prime{C}$ indicates fixed parameters that do not affect optimization.

The optimization problem \eqref{eq:HD_multi_objective_problem_3} is a min-max optimization problem \cite{p26}. Therefore, the solution to this type of problem is used here as well. First, two auxiliary variables $\tau_{\A}$ and $\tau_{\B}$ are defined and the problem \eqref{eq:HD_multi_objective_problem_3} is converted to 
% EQUATION ************************************************************************************************************
\begin{equation}\label{eq:HD_min_max_problem_1}
\left\{ \begin{array}{l}
\mathop {\min }\limits_{{P_{\R}}} {\rm{   }}{\tau _{\A}} + {\tau _{\B}}\\
{\rm{s}}{\rm{.t}}{\rm{.   }}{\tau _{\A}} \ge  - \frac{w}{2}{r_{\A}}\\
{\rm{s}}{\rm{.t}}{\rm{.   }}{\tau _{\B}} \ge  - \frac{{1 - w}}{2}{r_{\A}}\\
{\rm{s}}{\rm{.t}}{\rm{.   }}{P_{\R}} < {P_{tot}}
\end{array} \right.
\end{equation}
For further simplification, using the auxiliary variable $\tau=\tau_{\A}+\tau_{\B}$, the optimization problem is rewritten again as:
% EQUATION ************************************************************************************************************
\begin{equation}\label{eq:HD_min_max_problem_2}
\left\{ \begin{array}{l}
\mathop {\min }\limits_{{P_{\R}}} {\rm{   }}\tau \\
{\rm{s}}{\rm{.t}}{\rm{.   }}\tau  \ge  - \frac{w}{2}{r_{\A}} - \frac{{1 - w}}{2}{r_{\B}}\\
{\rm{s}}{\rm{.t}}{\rm{.   }}{P_{\R}} < {P_{tot}}
\end{array} \right.
\end{equation}
The optimization problem \eqref{eq:HD_min_max_problem_2} has no statistical expectation. Therefore, is much easier to solve than the optimization problem \eqref{eq:HD_multi_objective_problem_2}. To speed up convergence in the numerical solution, the following relationship can be used as a starting point.
$$P_{\R}=wP_{\R}^{\star\A}+(1-w)P_{\R}^{\star\B}$$
The answer to the problem \eqref{eq:HD_min_max_problem_2} is called $P_{\R}^{\star}$. The SNR of nodes \A and \B in this optimal point are $\gamma_{\A}^{\star}$ and $\gamma_{\B}^{\star}$. 

To establish the assumptions of lemma \ref{lem:1}, $\gamma_{\A}^{\star}>1$ and $\gamma_{\B}^{\star}>1$ are needed. Generally,
we assume $\gamma_{\A}^{\star}>\gamma_{T,\A}$ and $\gamma_{\B}^{\star}>\gamma_{T,\B}$. This comparison may support the assumption of lemma \ref{lem:1}, or satisfy the minimum throughput rate of node \A and node \B. Now, if the constraint $\gamma_{\A}^{\star}>\gamma_{T,\A}$ is not met in node \A, node \A is not transmitted and the relay uses its power $P_{\R}^{\star\B}$ to transmit the data of node \B. Similarly, if the constraint $\gamma_{\B}^{\star}>\gamma_{T,\B}$ is not met, node \B is not transmitted and the power $P_{\R}^{\star\A}$ is allocated to the relay.

Lemma \ref{lem:2} shows that the SNR is concave relative to the relay power and only has one maximum point. Therefore, the equation $\gamma_{\A}=\gamma_{T,\A}$ has exactly two roots $P_{T,\A}^1$ and $P_{T,\A}^2$; and the equation $\gamma_{\B}=\gamma_{T,\B}$ must have two roots equal to $P_{T,\B}^1$ and $P_{T,\B}^2$. Therefore, if $P_{\R}^{\star}<P_{T,\A}^1$ or $P_{\R}^{\star}>P_{T,\A}^2$, then node \A is not transmitted. Similarly, if $P_{\R}^{\star}<P_{T,\B}^1$ or $P_{\R}^{\star}>P_{T,\B}^2$, then node \B will have no transmission. In \eqref{eq:28}, $P_{T,i}^1$ and $P_{T,i}^2$ are calculated for $i\in\{\A,\B\}$. Moreover, the value $\sigma_1$ is given in \eqref{eq:29}.
% EQUATION ************************************************************************************************************
\begin{equation}\label{eq:28}
\begin{array}{l}
P_{T,i}^1 = \frac{{\left( {{H_{\bar i}} - {H_i}} \right){\gamma _{T,i}} + {H_{\A}}{H_{\B}}{\gamma _{T,i}} - {\sigma _1}}}{{2{H_{\A}}{H_{\B}}}}\\
P_{T,i}^2 = \frac{{\left( {{H_{\bar i}} - {H_i}} \right){\gamma _{T,i}} + {H_{\A}}{H_{\B}}{\gamma _{T,i}} + {\sigma _1}}}{{2{H_{\A}}{H_{\B}}}}
\end{array}
\end{equation}
% EQUATION ************************************************************************************************************
\begin{equation}\label{eq:29}
\begin{array}{l}
{\sigma _1} = \left[ {H_{\A}^2H_{\B}^2P_{tot}^2 - 6H_i^2{H_{\bar i}}} \right.{P_{tot}}{\gamma _{T,i}} - 2{H_i}{H_{\bar i}}^2{P_{tot}}{\gamma _{T,i}}\\
{\rm{    }} + {\left( {{H_{\A}}{\gamma _{T,i}}} \right)^2} + {\left( {{H_{\B}}{\gamma _{T,i}}} \right)^2}\\
\left. {{\rm{    }} - 2{H_{\A}}{H_{\B}}\gamma _{T,i}^2 - 8{H_{\A}}{H_{\B}}{\gamma _{T,i}}} \right]
\end{array}
\end{equation}

\subsection{Multi-Objective Optimization Problem in the Two-Way FD Relay}\label{subsec:multi_objective_FD}
To maximize the effective capacity of nodes in a cooperative system with a two-way FD relay and short packet transmission, the following optimization problem is used.
% EQUATION ************************************************************************************************************
\begin{equation}\label{eq:FD_multi_objective_problem_1}
\left\{ \begin{array}{l}
\mathop {\max }\limits_{{P_{\R}}} {\rm{   }}{R_{E,{\A}}}\\
\mathop {\max }\limits_{{P_{\R}}} {\rm{   }}{R_{E,{\B}}}\\
{\rm{s}}{\rm{.t}}{\rm{.   }}{P_{\R}} < {P_{tot}}
\end{array} \right.
\end{equation}
In this problem, the effective capacity of nodes \A and \B can be written as:
% EQUATION ************************************************************************************************************
\begin{equation}\label{eq:FD_effective_capacity}
{R_{E,i}} =  - \frac{1}{{m{\theta _i}}}\log \left[ {\left\{ {{{\mathop{\rm e}\nolimits} ^{ - {r _i}m{\theta_i}}}\left( {1 - {\epsilon _i}} \right) + {\epsilon _i}} \right\}} \right],{\rm{ }}i \in \{\A,\B\}. 
\end{equation}
In FD relay, the SINR is not concave relative to the relay power. This means that the second-order derivative of $\gamma_i,~i\in\{\A,\B\}$ is not always negative relative to $P_{\R}$. Moreover, the effective capacity is not concave relative to the relay power and the second-order derivative of $R_{E,i},~i\in\{\A,\B\}$ is not constantly negative relative to $P_{\R}$. Therefore, lemma \ref{lem:2} and \ref{lem:4} are not generally true for the FD relay. However, it is possible to prove that the SINR and effective capacity only have one maximum point relative to $P_{\R}$. Therefore, lemma \ref{lem:5} is presented as follows.
% LEMMA ************************************************************************************************************
\begin{lemma}\label{lem:5}
The SINR and the effective capacity have only one maximum point relative to the relay power. 
\end{lemma}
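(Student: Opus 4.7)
The plan is to prove unimodality of $\gamma_i$ in $P_{\R}$ directly, and then transport that property to $R_{E,i}$ via the same chain-rule identity used in Lemma~\ref{lem:4}, since full concavity is no longer available in the FD setting.

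First I would use the power budget $P=(P_{tot}-P_{\R})/2$ to eliminate $P$ in \eqref{eq:FD_gamma_A} and \eqref{eq:FD_gamma_B} and rewrite $\gamma_i=N_i(P_{\R})/D_i(P_{\R})$ as a rational function of $P_{\R}$ alone. A brief inspection shows that both $N_i$ and $D_i$ are polynomials of degree at most two: the numerator $N_i(P_{\R})=\tfrac{1}{2}H_{\A}H_{\B}(P_{tot}-P_{\R})P_{\R}$ is plainly quadratic, while every term of $D_i$ is either already quadratic in $P_{\R}$ or a product of two factors each linear in $P_{\R}$. The numerator vanishes at the two endpoints $P_{\R}=0$ and $P_{\R}=P_{tot}$, and $D_i$ stays strictly positive on $[0,P_{tot}]$ because all of its summands are nonnegative and the additive constant $1$ prevents simultaneous vanishing. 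Hence $\gamma_i$ is continuous, nonnegative, zero only at the endpoints and strictly positive inside, so its maximum is attained in $(0,P_{tot})$.

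To show this maximum is unique I would count zeros of $\gamma_i'$ via the quotient rule. Writing out $N_i' D_i - N_i D_i'$ for two quadratics, the cubic terms carry identical coefficients and cancel, leaving a polynomial of degree at most two; consequently $\gamma_i'=0$ has at most two roots on $(0,P_{tot})$. If two distinct interior maxima coexisted, the intermediate value theorem would force a local minimum between them and hence a third critical point, contradicting the quadratic bound. For the effective capacity, the chain rule
\begin{equation*}
\frac{\partial R_{E,i}}{\partial P_{\R}}=\frac{\partial R_{E,i}}{\partial r_i}\,\frac{\partial r_i}{\partial \gamma_i}\,\frac{\partial \gamma_i}{\partial P_{\R}},\qquad i\in\{\A,\B\},
\end{equation*}
together with the strict positivity of the first factor (Lemma~\ref{lem:3}) and the second factor (Lemma~\ref{lem:1}), shows that $\partial R_{E,i}/\partial P_{\R}$ and $\partial \gamma_i/\partial P_{\R}$ share the same sign and the same zero set. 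Hence $R_{E,i}$ inherits unimodality from $\gamma_i$ and attains its maximum at the same unique interior point.

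The step I expect to be the main obstacle is the cubic-term cancellation in $N_i'D_i-N_i D_i'$. That cancellation is precisely what shrinks the critical-point count from three to two and thereby rules out a second maximum; without it the counting argument would leave room for multimodal behavior. I would verify it by extracting the leading coefficients of $N_i$ and $D_i$ and observing that the two cubic contributions coincide. Verifying $D_i>0$ on the full feasible interval is a smaller but necessary sanity check, ensuring that the quotient-rule representation is valid throughout the domain and that no spurious singularity invalidates the zero-count above.
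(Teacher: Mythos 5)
Your proposal is correct and follows the same skeleton as the paper's proof: observe that $\gamma_i$ vanishes at $P_{\R}=0$ and $P_{\R}=P_{tot}$ and is strictly positive in between, so the maximum is interior; note that setting $\partial\gamma_i/\partial P_{\R}=0$ reduces to a quadratic because the cubic terms of $N_i'D_i-N_iD_i'$ cancel; conclude a unique interior maximizer; and transfer the conclusion to $R_{E,i}$ through the chain rule using the strictly positive factors from Lemmas~\ref{lem:1} and~\ref{lem:3}. The one step you handle differently is how to dispose of the second root of that quadratic: the paper computes both roots explicitly in \eqref{eq:32}, supposes the ``$+$'' root were feasible, derives the contradiction $H_{\A}\le 0$, and thereby isolates the closed-form maximizer $P_{\R}^{\star\A}$ of \eqref{eq:33}; you instead argue by counting, namely that a derivative with at most two zeros cannot support two interior maxima since a local minimum would have to sit between them. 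Your argument is cleaner and avoids the root algebra, but it does not yield the explicit expression for $P_{\R}^{\star\A}$, which the paper reuses immediately afterwards in the warm start $P_{\R}=wP_{\R}^{\star\A}+(1-w)P_{\R}^{\star\B}$ for the numerical solver; on your route you would still need to write down the feasible root separately for that purpose. A minor caveat, shared with the paper's own Lemmas~\ref{lem:4} and~\ref{lem:5}: the factor $\partial R_{E,i}/\partial r_i$ is computed with $r_i$ sitting inside an expectation over the fading channel, so the chain-rule identity is being used loosely, but since that factor is positive realization by realization the sign and zero-set argument is unaffected.
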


\begin{proof}
For simplicity, the proof is provided for node \A. Again, the relations are simply written and proved for node \B. $\gamma_{\A}$ for FD relay is written in \eqref{eq:FD_gamma_A}. Since the value of $\gamma_{\A}$ is constantly positive and becomes zero for $P_{\R}=0$ and $P_{\R}=P_{tot}$, it has a definite maximum point. Then, the quadratic equation is obtained by calculating the first-order derivative relative to $P_{\R}$ and the two values of $P_{\R}$ are calculated as follows:
% EQUATION ************************************************************************************************************
\begin{equation}\label{eq:32}
\begin{array}{l}
{P_{{\R}}^1},{\rm{ }}{P_{{\R}}^2} = \\
 - \frac{{\left( {{H_{\A}}{P_{tot}} + {H_{\B}}{P_{tot}} + 2} \right)\left( {\Omega {P_{tot}} + 2} \right)}}{{\left( {{H_{\A}} - {H_{\B}}} \right)\left( {\Omega {P_{tot}} + 2} \right) + 2\Omega \left( {{H_{\A}}{P_{tot}} + 1} \right)}}\\
 \pm \frac{{2\sqrt {\left( {{H_{\A}}{P_{tot}} + 1} \right)\left( {\Omega {P_{tot}} + 1} \right)\left( {\Omega {P_{tot}} + 2} \right)\left( {{H_{\A}}{P_{tot}} + {H_{\B}}{P_{tot}} + 2} \right)} }}{{\left( {{H_{\A}} - {H_{\B}}} \right)\left( {\Omega {P_{tot}} + 2} \right) + 2\Omega \left( {{H_{\A}}{P_{tot}} + 1} \right)}}
\end{array}
\end{equation}
It should be noted that the second fraction sign is positive for $P_{\R}^1$ and negative for $P_{\R}^2$. First, it is assumed that $P_{\R}^1$ is the answer and $P_{\R}^1\ge0$. After multiplying both sides of the equation by the fraction's denominators and achieving an algebraic expression, its sign is determined. After simplification, it is proved that $H_{\A}\le0$ which is not true. Therefore, there is only one maximum point as:
% EQUATION ************************************************************************************************************
\begin{equation}\label{eq:33}
\begin{array}{l}
{P_{{\R}}^{\star\A}}= \\
 - \frac{{\left( {{H_{\A}}{P_{tot}} + {H_{\B}}{P_{tot}} + 2} \right)\left( {\Omega {P_{tot}} + 2} \right)}}{{\left( {{H_{\A}} - {H_{\B}}} \right)\left( {\Omega {P_{tot}} + 2} \right) + 2\Omega \left( {{H_{\A}}{P_{tot}} + 1} \right)}}\\
 - \frac{{2\sqrt {\left( {{H_{\A}}{P_{tot}} + 1} \right)\left( {\Omega {P_{tot}} + 1} \right)\left( {\Omega {P_{tot}} + 2} \right)\left( {{H_{\A}}{P_{tot}} + {H_{\B}}{P_{tot}} + 2} \right)} }}{{\left( {{H_{\A}} - {H_{\B}}} \right)\left( {\Omega {P_{tot}} + 2} \right) + 2\Omega \left( {{H_{\A}}{P_{tot}} + 1} \right)}}.
\end{array}
\end{equation}
Then, similar to the relations of lemma \ref{lem:4} and considering the assumptions of lemma \ref{lem:1}, it is proved that $R_{E,\A}$ is maximized at the point $P_{\R}^{\star\A}$.  
\end{proof}

Although the effective capacity is not concave relative to the relay power, lemma \ref{lem:5} shows that the effective capacity only has one maximum point. Thus, during the numerical solution of the problem, if the solution algorithm converges, the obtained solution will be the global maximum answer. On the other hand, when the self-interference is eliminated properly and $\Omega\to 0$, lemmas \ref{lem:2} and \ref{lem:4} are established and the effective capacity is concave. Therefore, the optimal answer for nodes \A and \B will be close to $P_{\R}^{\star\A}$ and $P_{\R}^{\star\B}$ even if $\Omega$ is small but not necessarily close to zero. Therefore, the following point is applied as a starting point to increase the computing speed in multi-objective optimization.
$$P_{\R}=wP_{\R}^{\star\A}+(1-w)P_{\R}^{\star\B}$$

For non-concave optimization \eqref{eq:FD_multi_objective_problem_1}, the weighted sum method is a sufficient condition to achieve the optimal Pareto point. This means that if this method is applied, the answers will be the optimal Pareto points. This method is not a necessary condition to calculate the optimum points. However, since the weighted sum method is simple, it is applied in FD problem too. If other methods that are the necessary and sufficient condition to achieve the optimal Pareto point are used, the results of the HD problem will no longer comparable with FD. Since the results obtained in HD and FD mode are going to be compared, the weighted sum method will be applied to solve the problem \eqref{eq:FD_multi_objective_problem_1} (articles such as \cite{p21} have used this method to solve the non-convex problem). Therefore, by combining the weighted sum of two objective functions, the multi-objective optimization problem is simplified to
% EQUATION ************************************************************************************************************
\begin{equation}\label{eq:FD_multi_objective_problem_2}
\begin{array}{l}
\left\{ \begin{array}{l}
\mathop {\min }\limits_{{P_{\R}}} {\rm{   }}J\\
{\rm{s}}{\rm{.t}}{\rm{.   }}{P_{\R}} < {P_{tot}}
\end{array} \right.,\\
J = \frac{w}{{m{\theta _{\A}}}}\log \left[ {\mathbb{E}\left\{ {{{\mathop{\rm e}\nolimits} ^{ - {r_{\A}}m{\theta_{\A}}}}\left( {1 - {\epsilon _{\A}}} \right) + {\epsilon _{\A}}} \right\}} \right]\\
{\rm{  }} + \frac{{1 - w}}{{m{\theta _{\B}}}}\log \left[ {\mathbb{E}\left\{ {{{\mathop{\rm e}\nolimits} ^{ - {r_{\B}}m{\theta_{\B}}}}\left( {1 - {\epsilon _{\B}}} \right) + {\epsilon _{\B}}} \right\}} \right].
\end{array}
\end{equation}
Similar to equations \eqref{eq:HD_J_1} to \eqref{eq:HD_min_max_problem_1}, problem \eqref{eq:FD_multi_objective_problem_2} is also simplified as a min-max problem \eqref{eq:FD_min_max_problem_1}.
% EQUATION ************************************************************************************************************
\begin{equation}\label{eq:FD_min_max_problem_1}
\left\{ \begin{array}{l}
\mathop {\min }\limits_{{P_{\R}}} {\rm{   }}\tau \\
{\rm{s}}{\rm{.t}}{\rm{.   }}\tau  \ge  - \frac{w}{2}{r_{\A}} - \frac{{1 - w}}{2}{r_{\B}}\\
{\rm{s}}{\rm{.t}}{\rm{.   }}{P_{\R}} < {P_{tot}}
\end{array} \right.
\end{equation}
Once again $w$ and $1-w$ indicate the importance of the effective capacity of nodes \A and \B, respectively, and $r_i,~i\in\{\A,\B\}$ is the throughput rate of node $i$ in the FD relay.

% ***********************************************
% SECTION 5 *************************************
% ***********************************************

\section{Simulation, Numerical Results, and Comparison}\label{sec:simulation}

This section provides simulations, numerical results, and a comparison of performance in different modes. Here the main parameters include $m=100$, $P_{tot}=1000~W$, $\alpha=4$, $\epsilon_{\A}=\epsilon_{\B}=10^{-4}$, $\theta_{\A}=\theta_{\B}=10^{-3}$, $d_{\A}=0.5$, $d_{\A}+d_{\B}=1$ and $\gamma_{R,\A}=\gamma_{T,B}=1$. If any of these parameters is altered, the new value will be mentioned. 

\begin{figure}
	\begin{center}
		\includegraphics[draft=false,width=0.95\linewidth]{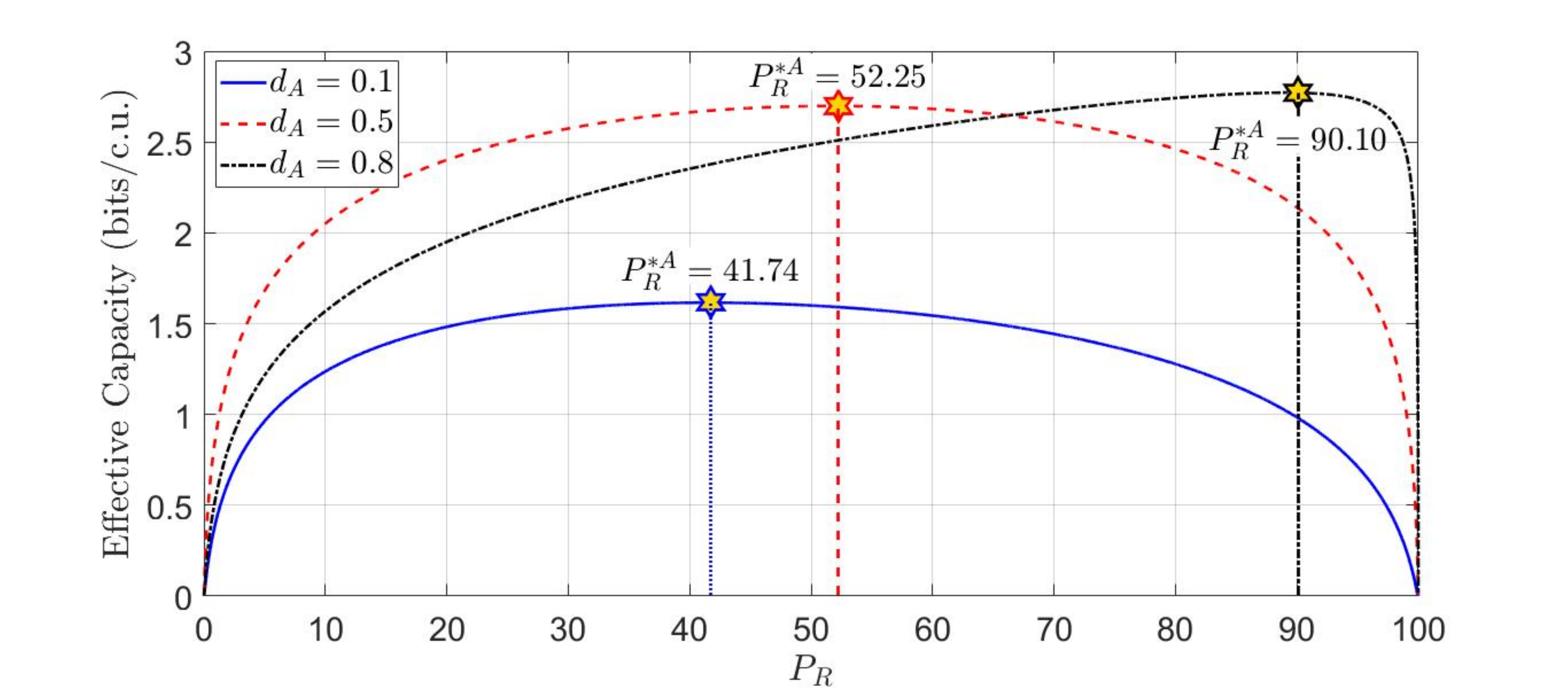}
	\end{center}
	\caption{Effective capacity of node \A vs. HD relay power and different node-relay distance.}
	\label{fig:2}
\end{figure}

Figure \ref{fig:2} shows the effective capacity of node \A in terms of HD relay power for three values of $d_{\A}=0.1$, $d_{\A}=0.5$ and $d_{\A}=0.8$. This figure specifies $P_{\R}^{\star\A}$ for which $R_{E,\A}$ is maximized. According to lemma \ref{lem:4}, the effective capacity of each node is only maximized for a single value of relay power. Meanwhile, if $d_{\A}$ increases, the relay will approach node \B and the packet of node \B will reach the relay even with the small amount of power. On the other hand, since the relay is away from node \A, it has to spend more power to transmit packets to node \A. Therefore, as the relay \R takes distance from node \A and approaches node \B, more power should be allocated to the relay. This issue is also illustrated in Figure \ref{fig:2}.

\begin{figure}
	\centering
	\subfloat[$d_\A=0.1$]{\includegraphics[width=0.85\linewidth]{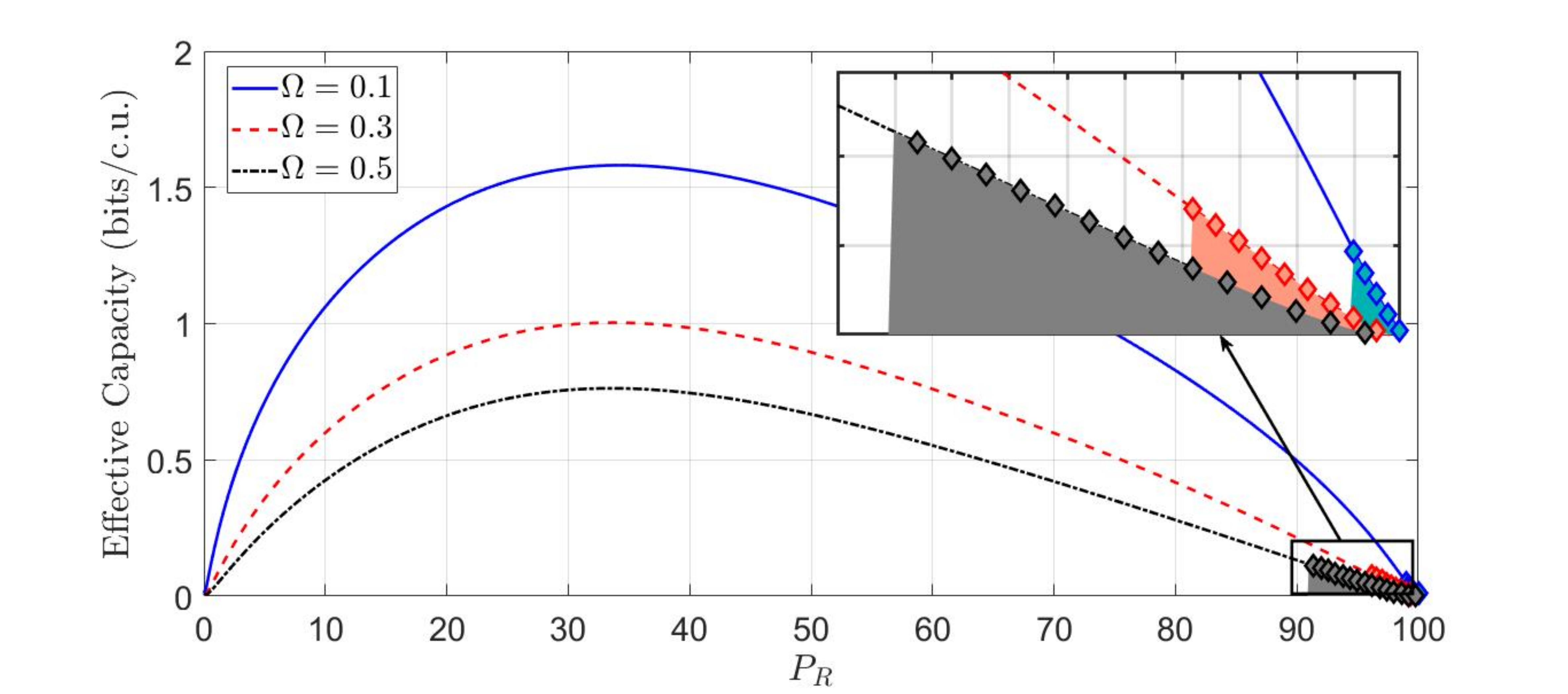}
		\label{fig:3a}}
	\hfil
	\subfloat[$d_\A=0.9$]{\includegraphics[width=0.85\linewidth]{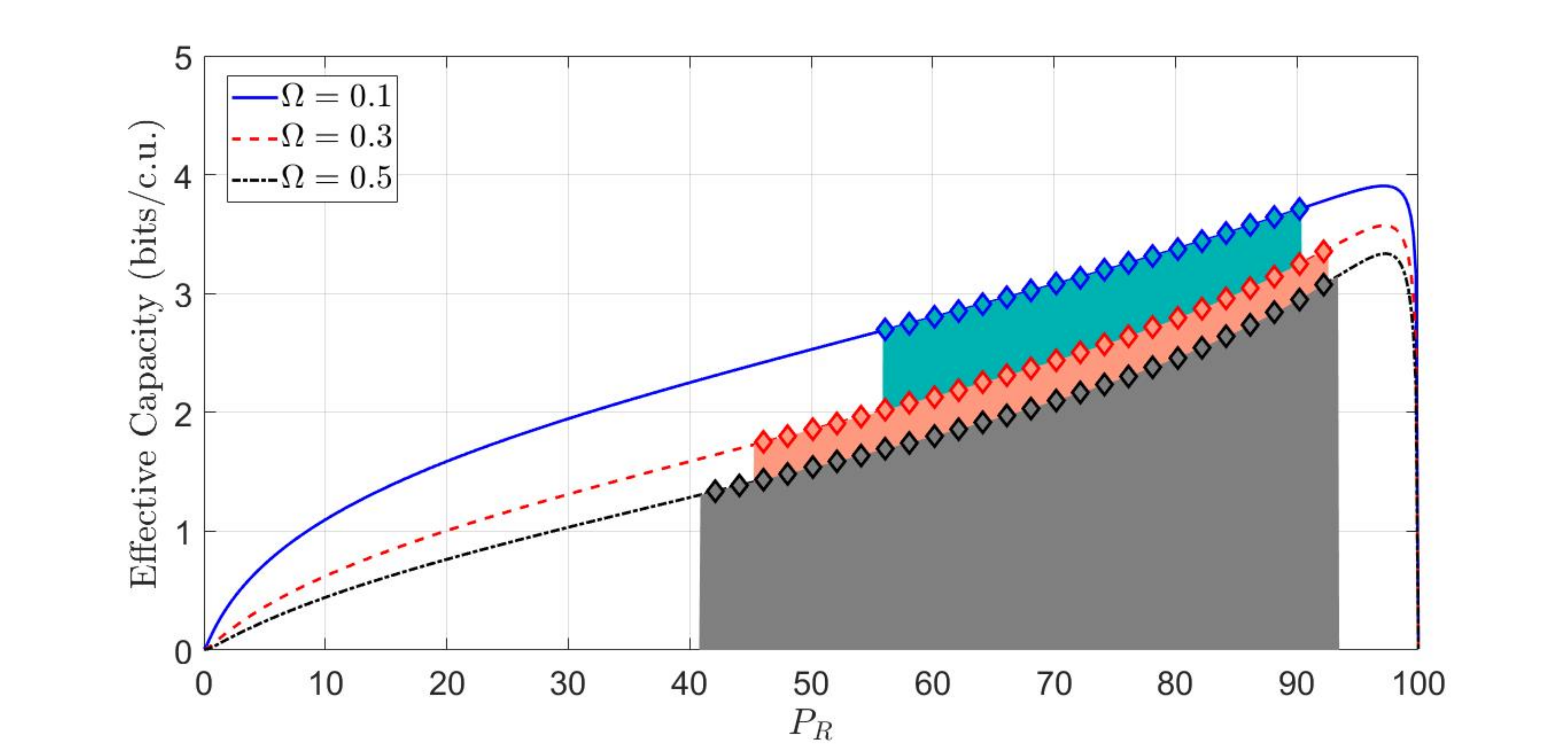}
		\label{fig:3b}}
	\hfil	
	\caption{Effective capacity of node \A vs. FD relay power and different mean residual power interference coefficient (a) $d_{\A}=0.1$, (b) $d_{\A}=0.9$.}
	\label{fig:3}
\end{figure}

Lemma \ref{lem:5} showed that the effective capacity in the FD relay is not necessarily concave relative to the relay power. However, the effective capacity always maximizes at one point. Moreover, by decreasing the mean residual power interference coefficient $\Omega$, the FD SINR tends to the SNR of the HD relay. Figure \ref{fig:3} indicates the effective capacity of node \A in terms of FD relay power for three values of $\Omega=0.1$, $\Omega=0.3$ and $\Omega=0.5$. In Figure \ref{fig:3a} $d_{\A}=0.1$ and in Figure \ref{fig:3b} we have $d_{\A}=0.9$. The effective capacity only has one maximum point in all three cases. However, the effective capacity is not constantly concave and parts of the curve in which the second-order derivative is positive are highlighted in Figure \ref{fig:3}. Moreover, these parts are reduced by decreasing $\Omega$. Since the power interference coefficient is less than 0.1 in real constraints, the effective capacity in FD relay and under practical constraints $\Omega\ll 1$ can be considered concave.

\begin{figure}
	\begin{center}
		\includegraphics[draft=false,width=0.95\linewidth]{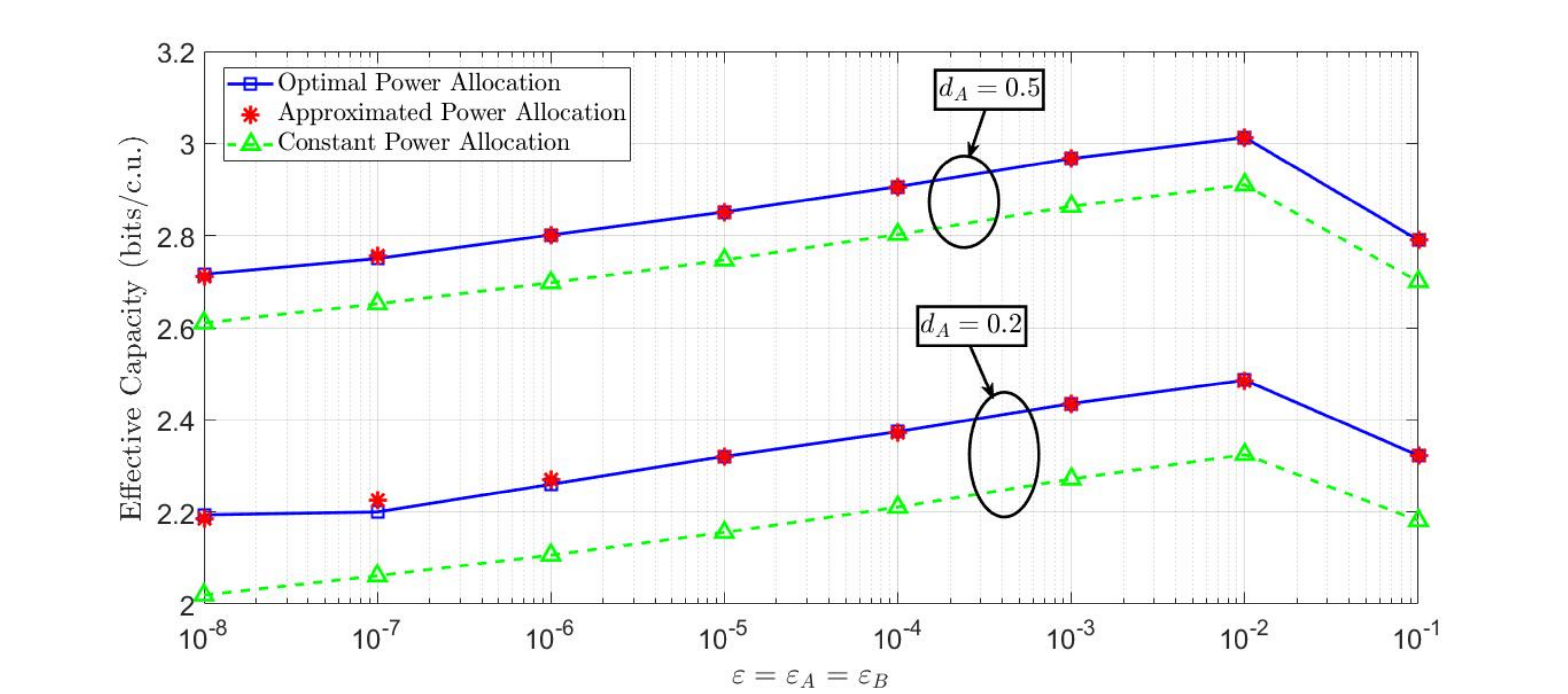}
	\end{center}
	\caption{Weighted sum of the effective capacity vs. error probability in HD relay with optimal and approximated power allocation and constant power allocation.}
	\label{fig:4}
\end{figure}

Figure \ref{fig:4} indicates the weighted sum of the effective capacities for two nodes \A and \B with an identical weight of $w=0.5$ in terms of error probability in HD relay. The effective capacity with approximate power allocation (optimization problem \eqref{eq:HD_min_max_problem_2}) matches with effective capacity with optimal power allocation (optimization problem \eqref{eq:HD_multi_objective_problem_1}) properly. Since solving the simplified optimization problem \eqref{eq:HD_min_max_problem_2} has higher execution speed, we use the approximated solution for further comparison. The execution speeds are compared in Table \ref{tab:2}. Figure \ref{fig:4} presents the weighted sum of the effective capacities of two nodes with equal power allocation $P_{tot}/3$ between nodes \A and \B and relay \R. Accordingly, effective capacity with optimal power allocation provides better performance than equal power allocation between nodes and relays. This superior performance is most noticeable under the constraints that the relay is not placed in the middle of the nodes.

Figure \ref{fig:4} illustrates the weighted sum of the effective capacity of two nodes with an identical weight of $w=0.5$ for $d_{\A}=0.2$ and $d_{\A}=0.5$. The identical weight of $w=0.5$ means that nodes \A and \B have equal importance and the multi-objective optimization problem attempts to maximize the effective capacity of both nodes with the same priority. When the relay is placed between two nodes, the weighted sum of the effective capacity is more than the case that the relay is distant from node B. If the relay is distant from node \B, the attenuation of the relay-node \B channel is more than that of the node \A. Accordingly, part of the available power is consumed to compensate for this additional attenuation so that two nodes with the same priority would become similar in terms of channel attenuation. Then, the residual power is divided to maximize the weighted sum of the effective capacities of nodes \A and \B. Therefore, the performance of the system with the same priority between nodes is greater for the case that the relay is located in the middle of two nodes. Moreover, the effective capacity is maximized at a certain value of the error probability. Therefore, the values of $m$, $\epsilon$ and $\theta$ should be proportional to each other. Similar to the simulations in Figure \ref{fig:4}, other simulations are conducted for the weighted sum of the effective capacity in terms of different lengths of packet $m$ that have similar results. To avoid redundancy, these figures are not provided here.

\begin{figure}
	\begin{center}
		\includegraphics[draft=false,width=0.95\linewidth]{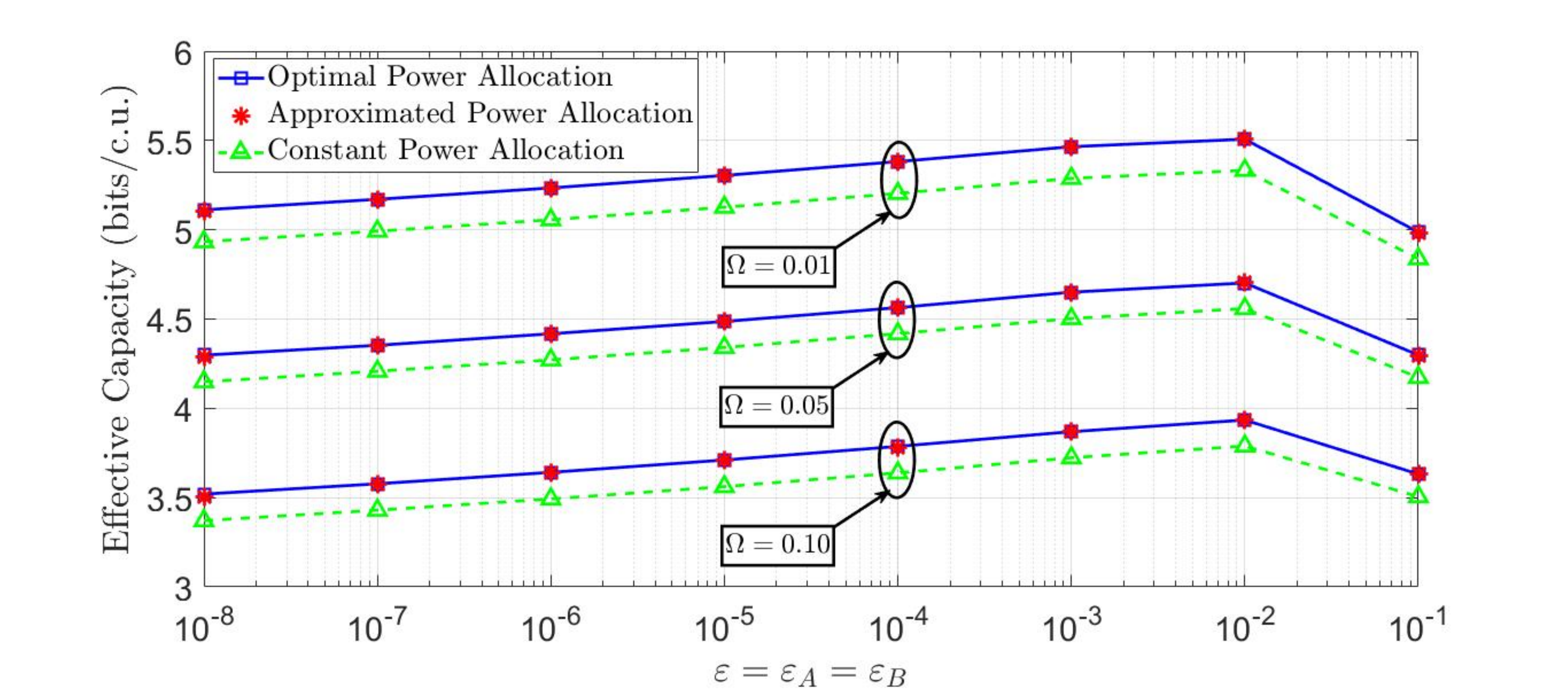}
	\end{center}
	\caption{Weighted sum of the effective capacity vs. error probability in FD relay with optimal and approximated power allocation and constant power allocation.}
	\label{fig:5}
\end{figure}

Figure \ref{fig:5} presents the weighted sum of the effective capacity with the identical weight $w=0.5$ in terms of the error probability $\epsilon=\epsilon_{\A}=\epsilon_{\B}$ in the FD relay between nodes \A and \B. This simulation is conducted for three diferent conditions with almost no self-interference and $\Omega=0.01$, moderate self-interference and $\Omega=0.05$, and high self-interference and $\Omega=0.1$. The weighted sum of effective capacity is matched between the optimal solution of the problem \eqref{eq:FD_multi_objective_problem_2} and the approximate solution of the problem \eqref{eq:FD_min_max_problem_1}, which indicates the high accuracy of the simplified optimization problem. Solving the simplified optimization problem has a much higher execution speed. Therefore, the approximate power allocation will be applied for further comparisons. The execution time to solve the optimal and simplified problem is compared in Table \ref{tab:2}.

Figure \ref{fig:5} also shows the weighted sum of the effective capacity of the two nodes with equal power allocation $P_{tot}/3$. Effective capacity with optimal power allocation enjoys much better performance than the identical power allocation between the nodes and relays. Besides, according to our expect, by reducing the self-interference power coefficient $\Omega$, the effective capacity will increase.

\begin{figure}
	\begin{center}
		\includegraphics[draft=false,width=0.95\linewidth]{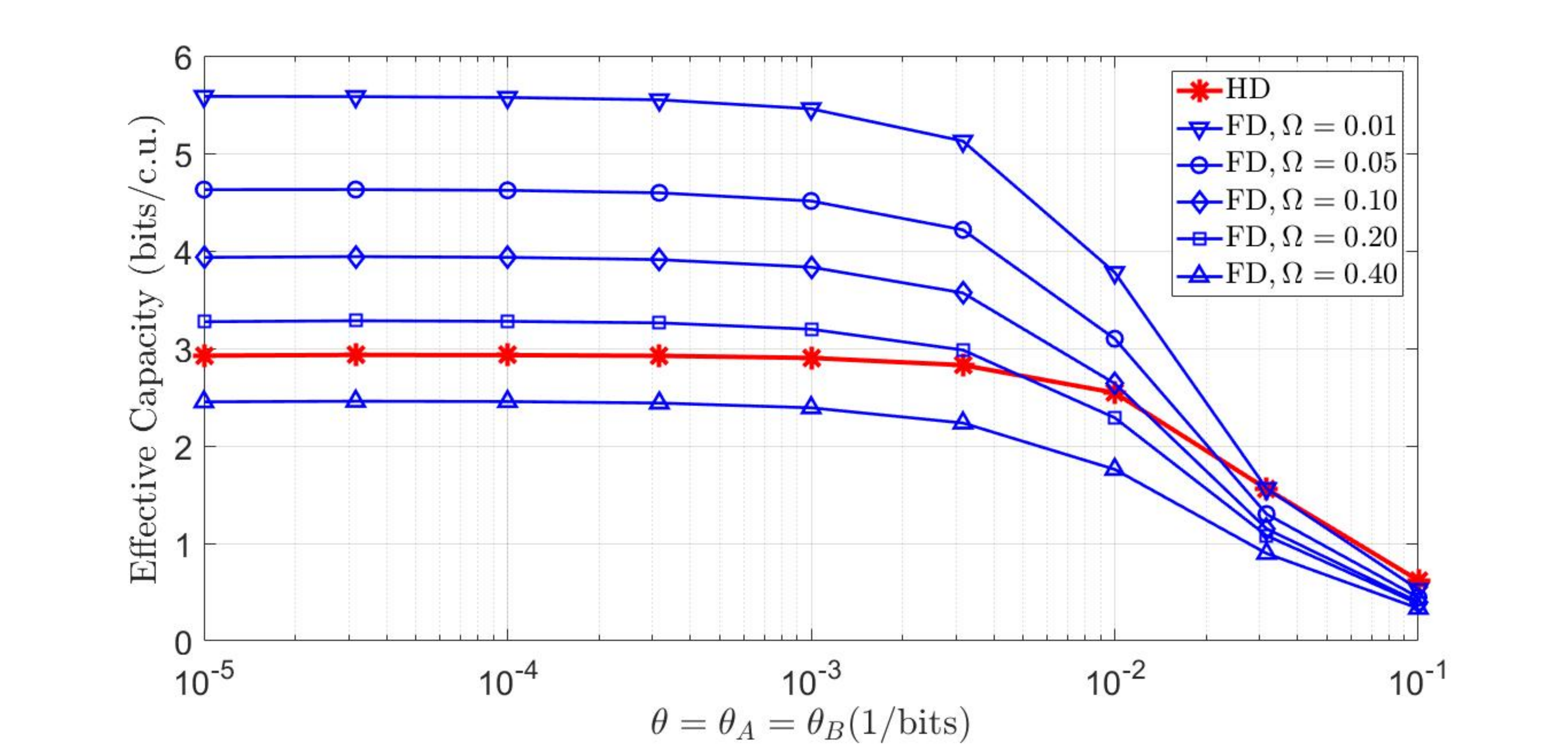}
	\end{center}
	\caption{Comparison of weighted sum of the effective capacity in HD relay and FD relay with different mean residual power interference coefficient $\Omega$.}
	\label{fig:6}
\end{figure}

Fiure \ref{fig:6} provides a comparison between the weighted sum of the effective capacity in the HD and FD relays. The FD relay has self-interference, which affects the overall performance of the system. Accordingly, at low self-interference values (e.g. $\Omega=0.01$), the effective capacity of an FD relay is almost twice greater than that of the HD relay. The reason for this phenomenon is that the FD relay uses only one-time or frequency interval to receive packet and transmit them to the nodes. However, the HD relay uses two-time or frequency intervals to transmit the same packet. When the self-interference is removed properly, the performance of the FD relay is twice that of the HD relay. 

By increasing the self-interference coefficient $\Omega$, the FD relay’s performance decreases. Even at large quantities of $\theta$, which indicates the strict quality of service and very low buffer delay, the HD relay shows a better performance. In other words, self-interference’s effect is more highlighted at large quantities of $\theta$ and strict quality of service. Therefore, if strict service quality is required, it is suggested to remove self-interference in the FD relay completely or use the HD relay. With further increase in $\theta$, the FD relay performance will eventually be worse than the HD relay at all values of $\theta$. In such cases, the self-interference is so severe that it can even compensate the double use of HD relays in time or frequency intervals.

\begin{figure}
	\begin{center}
		\includegraphics[draft=false,width=0.95\linewidth]{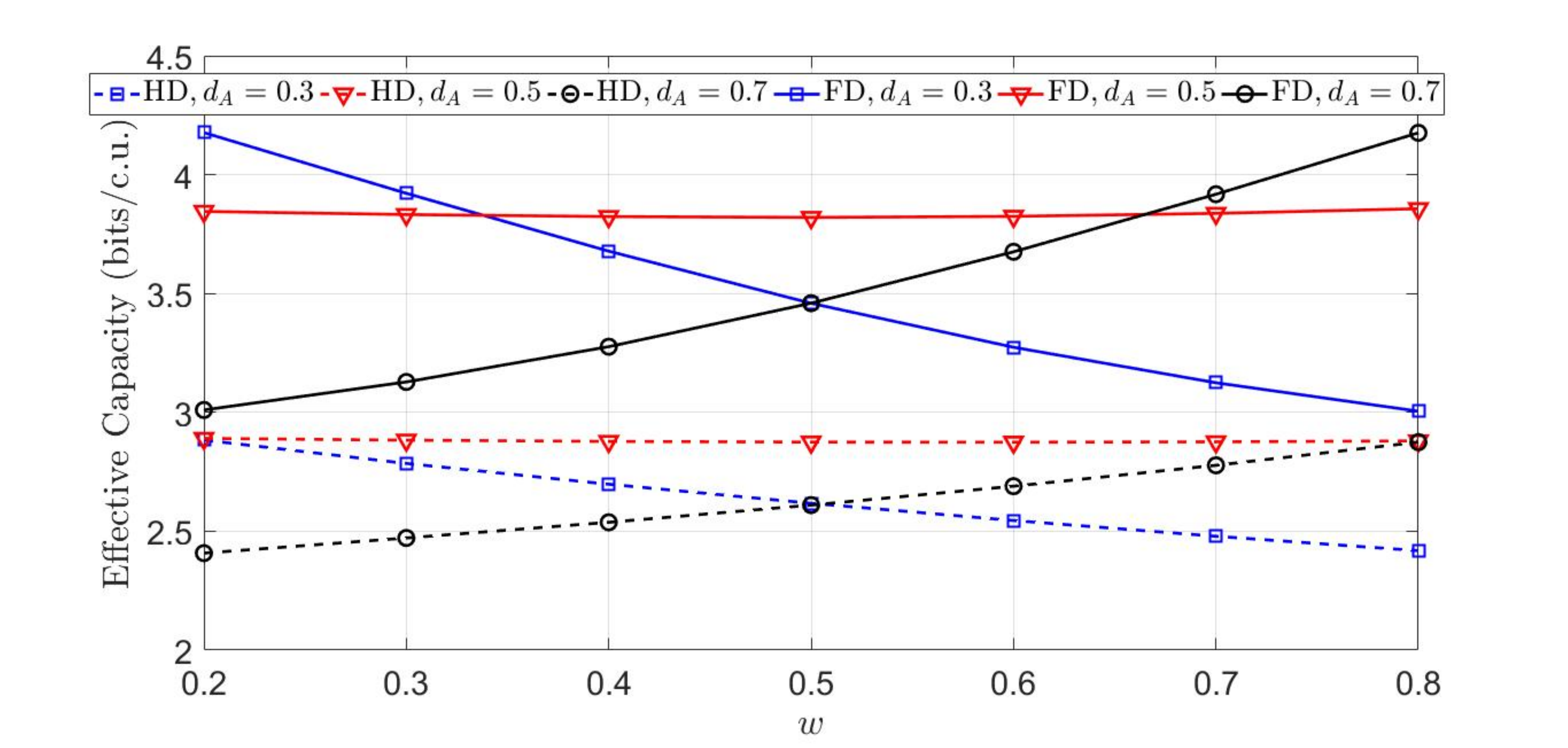}
	\end{center}
	\caption{Comparison of weighted sum of the effective capacity in HD relay and FD relay vs. weight $w$ and different distance between relay and nodes.}
	\label{fig:7}
\end{figure}

In the proposed model, the priority of nodes \A and \B are not necessarily the same. Therefore, in transforming the multi-objective problem and providing the weighted sum of the effective capacity, the weight $w$ might have different values. Figure \ref{fig:7} shows the weighted sum of the effective capacity in terms of $w$ for the HD relay and FD relay with $\Omega=0.1$. To provide a better comparison, the distance of the relay to node \A is set to the three values of $d_{\A}=0.3$, $d_{\A}=0.5$ and $d_{\A}=0.7$. When $d_{\A}=0.5$ and the relay is in the middle of two nodes, the coefficient $w$ does not have much effect on the weighted sum of the capacity. If the relay is in the middle of two nodes, the SNR (and SINR) of the two nodes are symmetric. On the average case, the weighted sum of the effective capacity of the system is unique under all values of $w$. However, when the relay is not in the middle of two nodes, the priority of nodes and the $w$ value have a relatively significant effect on the weighted sum of the effective capacity. However, when the relay is close to node \A and $d_{\A}=0.3$, the weighted sum of the effective capacity decreases with increasing $w$ (increasing the performance priority of node \A). In this case, the relay channel to node \B has higher attenuation than the relay channel to node \A. In lower values of $w$, the power is divided to compensate the poor condition of the relay channel to node \B, and relatively good performance is results. However, as $w$ increases and the priority of node \A is enhances, the power is divided such that node \A achieves an appropriate performance. Therefore, the performance of node \B is ignored. In this case, the inappropriate condition of the relay channel to node \B have an impact on the overall system performance and reduces the weighted sum of the effective capacity. When $d_{\A}=0.7$, the situation is completely reversed and by increasing $w$, the weighted sum of capacity is enhanced. Therefore, to achieve proper performance, when the priority of nodes is different, it is necessary to locate the relay between nodes in a right place and it has a significant effect on the overall system performance.

In the present paper, the problem is solved as multi-objective optimization instead of using a single-objective optimization approach. In the single-objective problem, the objective function is defined as the mean effective capacity of two nodes. Therefore $R_E=(R_{E,\A}+R_{E,\B})/2$ is maximized. The solution to this problem is equivalent to solving the multi-objective problem with $w=0.5$. Therefore, modeling the problem as multi-objective and its solution is more comprehensive than the conventional single-objective model. Therefore, the results should be better. This improvement in performance is quite evident in Figure \ref{fig:7} when the relay is not set in the middle of nodes \A and \B. For example, when the relay is near node \A and $d_{\A}=0.3$, the single-objective solution’s answer is $R_E=3.46$(bits/c.u.) in FD mode ($R_E=2.61$(bits/c.u.)  in the HD mode). However, in the multi-objective solution, if the weight $w$ is chosen properly, the weighted sum of the effective capacity may be higher than the mean effective capacity by about 25 percent in the FD mode (15 percent in HD mode). Another point in multi-objective optimization is the possibility of prioritizing the performance of nodes, which improves the overall system performance. 

\begin{figure}
	\begin{center}
		\includegraphics[draft=false,width=0.95\linewidth]{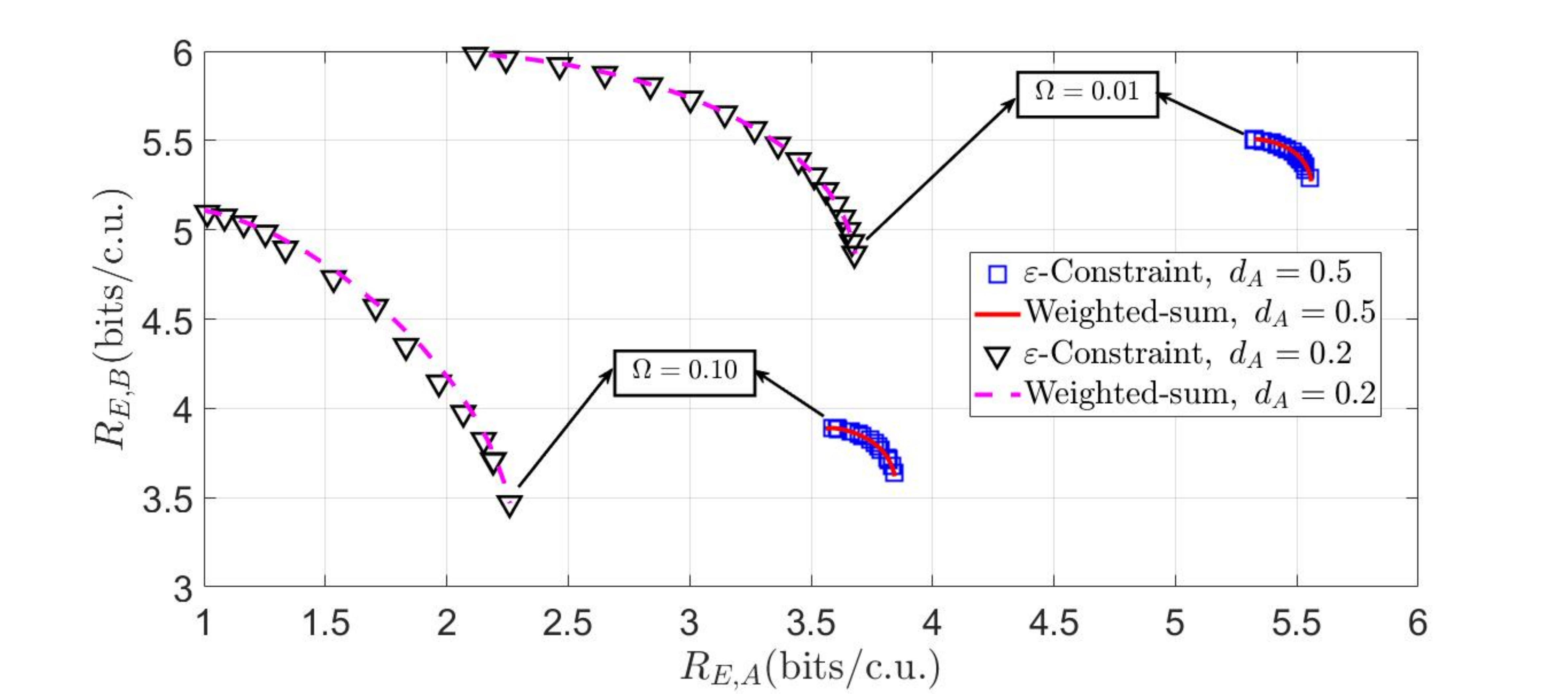}
	\end{center}
	\caption{Pareto optimal frontier with the weighted sum and $\epsilon$-constraint methods.}
	\label{fig:8}
\end{figure}

It was discussed in Section \ref{subsec:multi_objective_FD} that the weighted sum method is the only sufficient condition to reach the optimal Pareto point in the non-concave multi-objective problems. To evaluate the performance of this method, the calculated Pareto point by the weighted sum method and $\epsilon$-constraint method are plotted in Figure \ref{fig:8}. The $\epsilon$-constraint method is the necessary and sufficient condition to reach all the optimal Pareto points in non-concave problems \cite{p26}. This method is very simple when there are only two objective functions in the multi-objective optimization problem. Accordingly, one of the objective functions such as $R_{E,\B}$ in \eqref{eq:FD_multi_objective_problem_1} is transferred to the problem constraints and the optimization problem is rewritten as:
% EQUATION ************************************************************************************************************
\begin{equation}\label{eq:36}
\left\{ \begin{array}{l}
\mathop {\max }\limits_{{P_R}} {\rm{   }}{R_{E,{\A}}}\\
{\rm{s}}{\rm{.t}}{\rm{.   }}{R_{E,{\B}}} > \mu \\
{\rm{s}}{\rm{.t}}{\rm{.   }}{P_{\R}} < {P_{tot}}
\end{array} \right.
\end{equation}
In this problem, the Pareto points is calculated by altering $\mu$ and solving the problem (in the $\epsilon$-constraint method, when one of the objective functions is added to the problem constraints, this function is compared with $\epsilon$ and this is the reason of naming this method. However, in this article, $\epsilon$ is used elsewhere; therefore $R_{E,\B}$ is compared with $\mu$ in the added constraint). 

In Figure \ref{fig:8}, the optimal Pareto frontier is plotted by the weighted sum and $\epsilon$-constraint in FD relay. For a detailed analysis, the distance between the relay and node \A is assumed as $d_{\A}=0.5$ and $d_{\A}=0.2$ individually. Moreover, the intensity of self-interference is considered as the low and relatively high values of $\Omega=0.01$ and $\Omega=0.10$ respectively. The calculated Pareto frontier by both methods are almost identical. Therefore, the solution to the non-concave optimization problem \eqref{eq:FD_multi_objective_problem_1} with the weighted sum method is very close to the optimal performance.

If the relay is in the middle of two nodes and $d_{\A}=0.5$, the performance is symmetric and the range of variations for two nodes is the same. In addition, the range of variations is not extensive. However, when the relay is close to node \A,  symmetry is lost and the variation interval increases significantly. For example, when $\Omega=0.01$, the effective capacity of node \A varies within the range of $R_{E,\A}=2.1$(bits/c.u.) to $R_{E,\A}=3.7$(bits/c.u.) and that of node \B varies between $R_{E,\B}=4.9$(bits/c.u.) and $R_{E,\B}=6.0$(bits/c.u.). If $\Omega=0.10$, the effective capacity of node \A varies within the range of $R_{E,\A}=1.0$(bits/c.u.) to $R_{E,\A}=2.2$(bits/c.u.) and that of node \B varies between $R_{E,\B}=3.5$(bits/c.u.) and $R_{E,\B}=5.1$(bits/c.u.). Therefore, especially when the relay is not located between the two nodes \A and \B, power allocation is very efficient, which can significantly affect the overall system performance.
% TABLE ************************************************************************************************************
\begin{table}
  \centering
  \caption{Comparison of execution time in the optimal and approximated problem.}\label{tab:2}
  \begin{tabular}{cccc}
\hline
\multicolumn{1}{|c||}{relay mode}            & \multicolumn{3}{c|}{HD}                                                           \\ \hline
\multicolumn{1}{|c||}{$\epsilon=\epsilon_{\A}=\epsilon_{\B}$}                     & \multicolumn{1}{c|}{$10^{-8}$}   & \multicolumn{1}{c|}{$10^{-5}$}   & \multicolumn{1}{c|}{$10^{-2}$}   \\ \hline
\multicolumn{1}{|c||}{optimal solution}      & \multicolumn{1}{c|}{1415} & \multicolumn{1}{c|}{1356} & \multicolumn{1}{c|}{1327} \\ \hline
\multicolumn{1}{|c||}{approximated solution} & \multicolumn{1}{c|}{564}  & \multicolumn{1}{c|}{561}  & \multicolumn{1}{c|}{561}  \\ \hline
                                            &                           &                           &                           \\ \hline
\multicolumn{1}{|c||}{relay mode}            & \multicolumn{3}{c|}{FD}                                                           \\ \hline
\multicolumn{1}{|c||}{$\Omega$}                     & \multicolumn{1}{c|}{$0.01$}   & \multicolumn{1}{c|}{$0.05$}   & \multicolumn{1}{c|}{$0.10$}   \\ \hline
\multicolumn{1}{|c||}{optimal solution}      & \multicolumn{1}{c|}{1935} & \multicolumn{1}{c|}{1657} & \multicolumn{1}{c|}{1622} \\ \hline
\multicolumn{1}{|c||}{approximated solution} & \multicolumn{1}{c|}{553}  & \multicolumn{1}{c|}{540}  & \multicolumn{1}{c|}{547}  \\ \hline
\end{tabular}
\end{table}

Table \ref{tab:2} provides a comparison between the execution time of optimization problem in both optimal and approximated problem for HD relay with 3 different error probability and FD relay with 3 different values of $\Omega$. The simulations were implemented on a laptop with an 8-core Intel (R) Core (TM) i7-3632QM CPU @ 2.20GHz processor with 6GB of memory. For any level of error probability, 1000 random samples are generated as a channel and simulations are performed to achieve the gradient value of less than 0.1 at each point. Then the  required time for optimal and approximated solution are recorded. This procedure is repeated 100 times and the results of the average runtime are presented in Table \ref{tab:2}. The execution time of the approximated problem is much shorter than that of the optimal problem. This difference increases with the increasing number of channel samples.

% ***********************************************
% SECTION 6 *************************************
% ***********************************************

\section{Conclusions}\label{sec:conclusions}
Considering the significance of low-latency data transmission in the fifth generation of communications, this article addressed transmission of the short packets between two nodes. Short packet data transmission requires short transmission time and decoding. Therefore, it is appropriate to delay-sensitive traffics. Moreover, the two-way relay is applied for data transmission between two nodes. The relay type could be either HD or FD. If an FD relay is applied, the data transmission time between two nodes is reduced. Finally, the weighted sum of the effective capacity for two nodes is considered to evaluate the performance. The effective capacity indicates the transferable rate through the channel by guaranteeing the statistical delay of packet in the transmitter buffer.

In the proposed model, the optimal power for effective capacity maximization of two nodes is allocated when the packets are transferred between these two nodes with a two-way HD or FD relay. It is also proved that the resulting performance is better than identical power allocation among the nodes. Since the performance priority of the two nodes is not necessarily the same, the problem optimization is considered multi-objective. In the case of using the HD relay, it is proved that the optimization problem is concave and it was solved by the weighted sum method. However, in the case of using the FD relay, although the problem has one optimal point, it is not concave. Therefore, the weighted sum method was also applied to solve the optimization problem in these conditions. As the optimization problem involves statistical means, its solution is long and time-consuming. Thus, the approximate method was obtained to calculate the optimal power of the relays and nodes. The answer of solving the approximate problem has an appropriate consistency with the results of the optimal solution. It is also proved that multi-objective optimization provides better results than solving the problem with the single-objective approach. In the single-objective case, the effective capacity of two nodes with similar weights is summed up and maximized. However, in the multi-objective case, the priority and weight of the nodes are not necessarily equal. Therefore, the overall system performance is better than the performance of the system in single-mode especially when the relay is not in the middle of two nodes or their priority is not the same. Regarding the performance of HD and FD relays, the FD relay outperforms the HD relay when the self-interference is reduced to an appropriate level. With increasing $\theta$ and the strict service requirement of the nodes, the performance of both HD and FD relays decreases naturally. However, in this case, HD outperforms FD gradually. Therefore, it is concluded that if the strict quality of service is considered, it is advised to remove self-interference in the FD relay largely or use the HD relay.

% ***********************************************
% REFERENCES ************************************
% ***********************************************	
% \bibliographystyle{spphys}
% \bibliographystyle{spmpsci}
\bibliographystyle{unsrt}
\bibliography{References}

\end{document}